\documentclass[
 superscriptaddress,
nofootinbib,
 amsmath,amssymb,
 twocolumn,
 aps,
 prd,
longbibliography
]{revtex4-2}

\usepackage{amsthm}

\usepackage[utf8]{inputenc}
\usepackage{geometry, cancel}
 \geometry{
 a4paper,
 total={170mm,257mm},
 left=20mm,
 top=20mm,
 }
\usepackage{float}
\usepackage{graphicx,xcolor}
\usepackage{enumitem}
\usepackage{braket}
\usepackage[colorlinks=true]{hyperref}
\usepackage{amsfonts,amssymb}
\usepackage{amsmath, mathtools}
\usepackage{caption,subcaption}

\usepackage[normalem]{ulem}

\newtheorem{lemma}{Lemma}
\newtheorem{theorem}{Theorem}

\captionsetup{justification=RaggedRight}

\newcommand{\Tr}{\textrm{Tr}}

\begin{document}

\title{Detector-based measurements for QFT: two issues and an AQFT proposal}
\author{Nicola Pranzini}
\email{nicola.pranzini@helsinki.fi}
\address{Department of Physics, P.O.Box 64, FIN-00014 University of Helsinki, Finland}
\address{QTF Centre of Excellence, Department of Physics, University of Helsinki, P.O. Box 43, FI-00014 Helsinki, Finland}
\address{InstituteQ - the Finnish Quantum Institute, Finland}
\author{Esko Keski-Vakkuri}
\email{esko.keski-vakkuri@helsinki.fi}
\address{Department of Physics, P.O.Box 64, FIN-00014 University of Helsinki, Finland}
\address{InstituteQ - the Finnish Quantum Institute, Finland}
\address{Helsinki Institute of Physics, P.O.Box 64, FIN-00014 University of Helsinki, Finland}

\begin{abstract}
    We present and investigate two issues within the measurement scheme for QFT presented by J. Polo-G\'omez, L. J. Garay and E. Mart\'in-Mart\'inez in “A detector-based measurement theory for quantum field theory"~\cite{Polo-GomezEtAl21}. We point out some discrepancies that arise when the measurement scheme is applied to contextual field states and show that $n$-point function assignments based on local processing regions sometimes lead to inconsistencies. To solve these issues, we modify the measurement scheme to use non-relativistic detectors to induce an update rule for algebraic states in the Haag-Kastler formulation of quantum field theory. In this way, $n$-point functions can be consistently evaluated across any region having a definite causal relation with measurements.
\end{abstract}

\maketitle

\section{Introduction}

In Ref.~\cite{Polo-GomezEtAl21}, J. Polo-G\'omez, L. J. Garay, and E. Mart\'in-Mart\'inez (henceforth, PGM) provided a measurement scheme for quantum fields through Unruh-DeWitt detectors as probes. The idea is to test a quantum field via a non-relativistic system used as a detector, applying the measurement postulate of QM to the latter to measure the former. Measurements introduced in this way are compatible with relativity and safe from the causality violations originally presented by R. D. Sorkin in \cite{Sorkin93}. The framework accounts for quantum correlations between spacelike-separated regions of spacetime, consistently with the standard no-communication theorem of QM~\cite{PeresT04, FlorigS97}. For this purpose, field states are taken to be contextual\footnote{The adjective \textit{contextual} has different interpretations in different settings. Throughout this article, we will use it in the same sense as PGM used it, i.e. as a synonym of \textit{subjective}.} to observers, i.e. they are treated as states of knowledge of observers instead of as elements of reality. Finally, the authors propose that all the information about the field can be specified in terms of $n$-point functions computed from the contextual field states, as tehy fully characterise the theory without giving explicit spacetime dependence to quantum field states.

We suggest modifying the framework proposed by PGM by interpreting states as non-contextual functions of spacetime regions. This modification allows addressing two questions arising from PGM's proposal, namely: how to describe measurements performed on an already collapsed state (discussed in the subsection \ref{s.issue1} of this introduction), and how to render PGM's $n$-point functions assignment consistent for all point configurations (discussed in the subsection \ref{s.issue2} of this introduction). As anticipated, answering these questions requires modifying PGM's framework. Specifically, we propose introducing spacetime-dependent (equivalence classes of) quantum field states, which address the first question (see Sec.~\ref{s.density states}), and relate them to algebraic states arising from algebraic QFT (AQFT), which answer the second question (see Sec.~\ref{s.algebraic states}). The result is a framework similar to that of K. -E. Hellwig and K. Kraus \cite{HellwigK70}, but different for the use of non-relativistic probes, their relation with the algebraic operation induced on the field algebra, and the specific algebraic state assignment. In case one is interested in the information that can be extracted locally and non-selectively, PGM's states can be recovered by looking at the maximal information available for an observer placed in the spacetime region where the state is evaluated on.

Our work aligns with the idea of addressing measurement-related issues via AQFT. This concept appeared both in PGM's work and in an article by C. Fewster and R. Verch (FV), who introduced a novel approach to measure a quantum field employing another field as a probe via algebraic techniques~\cite{FewsterV18}. While our interpretation of states differs slightly from FV's, our results represent an additional argument favouring the algebraic toolbox. In Sec.~\ref{s.conclusions}, we present our conclusions and briefly compare our results with those of Ref's~\cite{Polo-GomezEtAl21} and \cite{FewsterV18}. As we will show, this work builds a bridge between PGM's and FV's approaches, using their relative strengths as the foundation for a more user-friendly algebraic methodology for studying measurements in QFT.

\subsection{First issue: measuring collapsed states}
\label{s.issue1}
The first issue with contextual states relates to the possibility of observers not having the maximum available information about a system's state to perform measurements and obtain the correct results. In other words, if a state is only the summary of one's knowledge about a system, then observers having different information about the system will describe different states. Yet, all these observers should see the same physics, which generally contradicts different state assignments. This type of ignorance is more general than that one caused by the causal separation between an event and the observer speaking about it. Therefore, we distinguish two cases. On the one hand, an observer might be spacelike separated from the event they are considering; in this case, we say their ignorance has a causal (ontic) origin. On the other hand, an observer might be in the causal future of the event and yet be unaware of all its details, hence having a subjective type of ignorance that could possibly be removed by giving them enough information; in this case, we say their ignorance has a subjective (epistemic) origin. In the rest of this paragraph, we will always be concerned about this latter type of ignorance.

We illustrate the potential pitfalls of a naive definition of contextual states by the following example. Consider a non-relativistic two-level system in the state
\begin{equation}
    \ket{\psi}=\frac{1}{\sqrt{2}}\left(\ket{0}+\ket{1}\right)~,
    \label{e.context_two_level}
\end{equation}
and two observers Alice and Bob ($\mathcal{A}$ and $\mathcal{B}$) measuring the system in the computational basis $\{\ket{0},\ket{1}\}$.
The contextual states of Alice and Bob,  denoted by a lower index inside the ket (i.e. $ \ket{\psi_\mathcal{A}},\ket{\psi_\mathcal{B}}$), depend on their information 
about the state $\ket{\psi}$ and describe the probability of obtaining some measurement outcome over a collection of identically prepared systems according to a specific observer. Let us assume that initially both Alice and Bob know the system is prepared in the state \eqref{e.context_two_level}. Hence, their
contextual states are
\begin{equation}
\label{e.initial_contextual_state}
    \ket{\psi_\mathcal{A}}=\ket{\psi_\mathcal{B}}=\ket{\psi}~.
\end{equation} 
Next, Alice freely acts on it without Bob knowing. Specifically, 
suppose Alice performs a measurement and obtains an outcome 0, that she does not communicate to Bob\footnote{In order for Bob to collect enough data to compute expectation values, we can think Alice measures many systems, discarding those giving the outcome 1 and handing to Bob only those giving the outcome 0.}. Because of the collapse postulate, her contextual state
after this event is $\ket{\psi'_\mathcal{A}}=\ket{0}$, while Bob has not
gained any new information, so his contextual state is naively still $\ket{\psi'_\mathcal{B}}=\ket{\psi}$. In consequence,  Bob will obtain wrong statistics when trying to measure some observable $\hat{O}$, i.e. 
\begin{equation}
    \langle\hat{O}\rangle_\mathcal{B}=\bra{\psi}\hat{O}\ket{\psi}\neq\bra{0}\hat{O}\ket{0}= \langle\hat{O}\rangle_\mathcal{A}~.
\end{equation}

The resolution of the contradiction is that quantum states can be seen as contextual only if they include full knowledge of what happened to the system before it was handed to the observer. In other words, we need to accept \textit{"the existence of an “expert” whose probabilities we should strive to possess whenever we strive to be maximally rational"}~\cite{Fuchs02}, and that knows all details about the previous history of the system. We call this observer the \textit{Maximally Informed Observer} (MIO). Similarly, observers who are unaware of operations that happened in the past (or willingly ignore them) will describe a contextual quantum state which is not a good tool for foreseeing future outcomes. Therefore, contextual states come with the necessity of a MIO describing the best state possible, which can be regarded as somewhat \textit{more real} than the others.

Translating the above argument into relativistic QM adds the complication of causality. As we will see, PGM propose the state outside the light cone of a measurement can be either the one before the measurement or that describing a non-selective measurement, depending on the knowledge available to the observer. These exhaust all the possible situations one can encounter outside the measurement light cone.
However, using the same argument used in the non-relativistic example, an observer in the future of the measurement having limited access to the information about the outcome (and hence formally equivalent to one being outside the light cone, but for epistemic - not ontic - reasons) should use a state as seen by a MIO or get wrong results. As a result, the notion of MIO of the non-relativistic scenario acquires a spacetime dependence, for the maximal information available is always relative to some past spacetime region.

In Sec.~\ref{s.state_assignment}, we review PGM's argument for how a notion of spacetime dependence for states arises from the assumption of contextual states and, in Sec.~\ref{s.density states}, we propose a QFT state assignment to abandon contextuality entirely.

\begin{figure*}
     \centering
      \begin{subfigure}[b]{\columnwidth}
         \includegraphics[width=\textwidth]{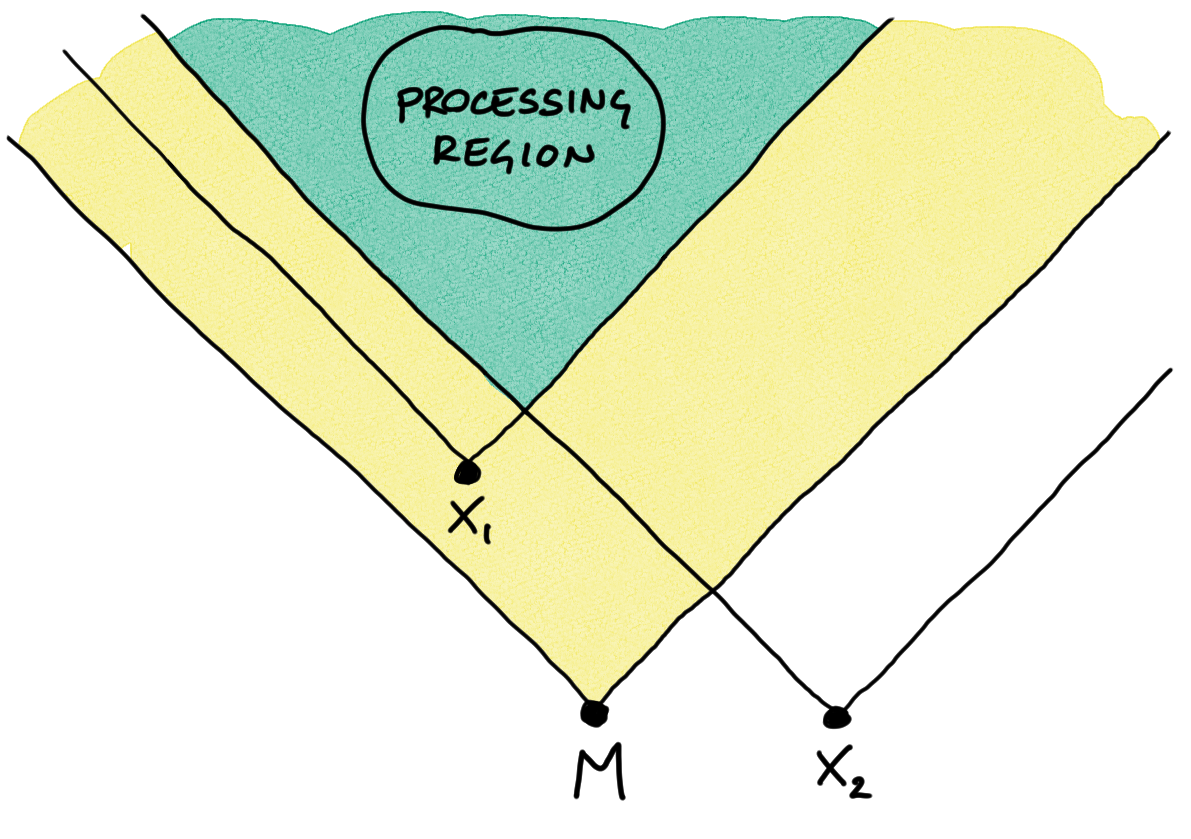}
         \caption{Pair of spacetime points for which PGM propose calculating $w(x_1, x_2)$ via {\bf P2}. The causal future of the measurement $M$ (yellow) contains the overlap of those of $x_1$ and $x_2$ (green). Consequently, the processing region for $w(x_1,x_2)$ is contained in the causal future of the measurement. As it is clear, information collected at both $x_1$ and $x_2$ is only available to MIOs knowing the measurement outcome, and the usage of {\bf P2} is justified.}
         \label{f.issue2a}
     \end{subfigure}
    \hfill
    \begin{subfigure}[b]{\columnwidth}
         \includegraphics[width=\textwidth]{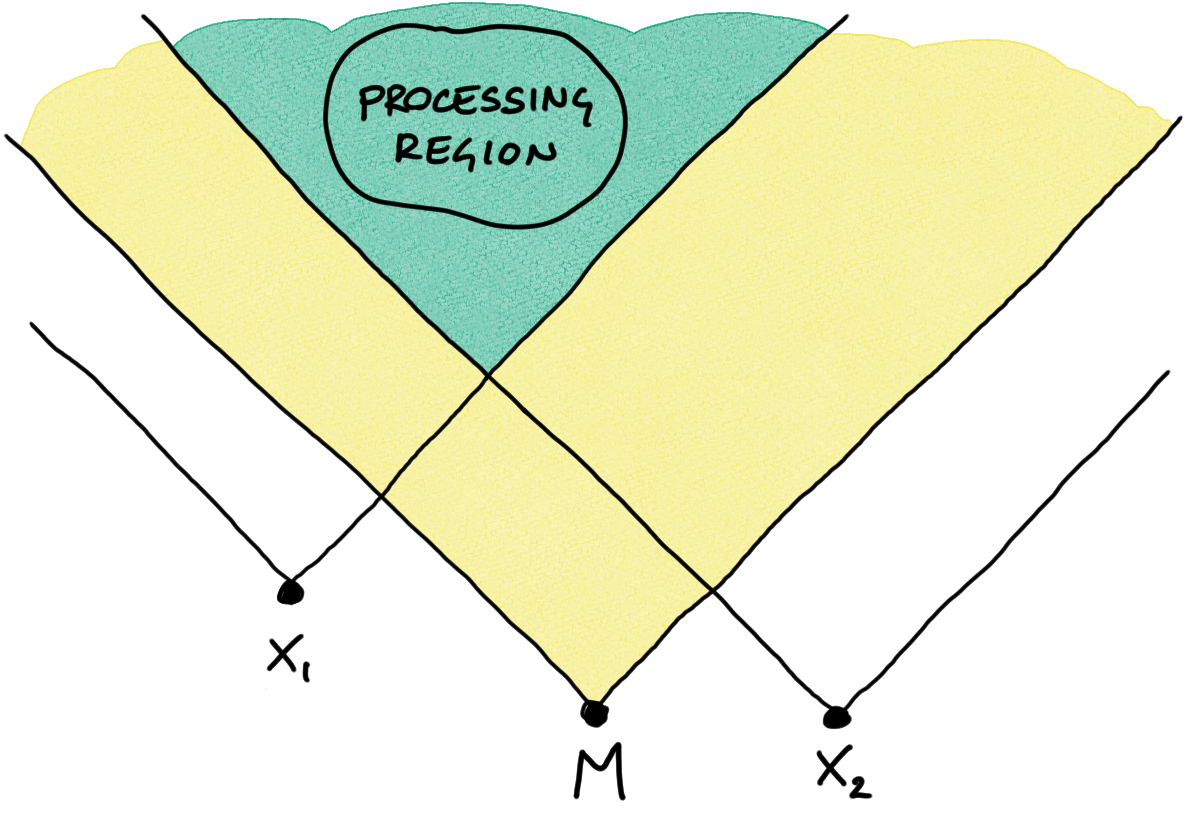}
         \caption{Pair of spacetime points for which calculating $w(x_1, x_2)$ via {\bf P1}-{\bf P2} is ill-defined. The causal future of the measurement $M$ (yellow) contains the overlap of the causal futures of $x_1$ and $x_2$ (green), and hence the processing region for $w(x_1,x_2)$. While {\bf P1} says to use the state of the MIO outside the yellow region, the logic leading to {\bf P2} says we should use the state of the MIO inside the yellow region, leading to a contradiction.}
         \label{f.issue2b}
     \end{subfigure}
        \caption{Localization of the processing region for the two-point function $w(x_1,x_2)$. On the left is an example of the setting for which {\bf P2} was introduced; on the right is one case for which using {\bf P1}-{\bf P2} leads to a contradiction.}
        \label{f.issue2}
\end{figure*}

\subsection{Second issue: two-point functions across and outside a measurement's future light cone}
\label{s.issue2}

The second issue arises from using PGM's contextual states to calculate $n$-point correlation functions. As evident, the proposed contextual nature of QFT states renders the usual rule for calculating $n$-point functions inadequate, as correlation functions not only depend on the field operators at the points they are inserted in, but also on the information available to the observers computing the given correlation function. To better formalise this idea, PGM introduce the notion of the \textit{processing region} of a $n$-point function. Given a set of spacetime points $\{x_1,\dots,x_n\}$ over which one wants to calculate the $n$-point function $w(x_1,\dots,x_n)$, the processing region is a portion of spacetime contained in the intersection of future lightcones of these points. Messages carrying information about the events labelled by $x_1,\dots,x_n$ can only reach observers in the processing region~\cite{Ruep21}. Therefore, computation of correlation functions for operators inserted in $x_1,\dots,x_n$ must be ``performed'' by MIOs in the processing region, able to use the appropriate state for the computation. Because of the explicit appearance of the MIO in this construction, PGM argued that moving from density operators to the proposed $n$-point functions can remove the intrinsic contextuality of the measurement scheme.

Let us view how the computation of $n$-point functions works in the case of a spacetime containing only one measurement $M$. In this case, PGM propose the following prescription\footnote{Notice that, while the original formulation of the prescription does not include the notion of MIO, it is essentially equivalent to the one presented herein.} to calculate $n$-point functions:
\begin{itemize}
    \item[{\bf P1}] if the points are all inside (all outside) the measurement's future light cone, use
    \begin{equation}
    \label{e.all_in}
        w(x_1,\dots,x_n)=\Tr[\rho\phi(x_1),\dots,\phi(x_n)]~,
    \end{equation}
    where $\rho$ is the contextual field's state relative to the MIO inside (outside) the future light cone of the measurement.
    \item[{\bf P2}] if the points are both inside and outside $M$'s future light cone, use
    \begin{equation}
    \label{e.some_in}
        w(x_1,\dots,x_n)=\Tr[\rho'\phi(x_1),\dots,\phi(x_n)]~.
    \end{equation}
    where $\rho'$ is the contextual field's state relative to the MIO inside the future light cone of the measurement.
\end{itemize}
The reason for using the state relative to the MIO inside the future light cone of the measurement in \textbf{P2} is that, for $n>1$, $n$-point functions are non-local quantities that must locally be computed. To do so, an observer willing to calculate a $n$-point function should wait until the necessary (locally collected) information reaches some processing region which, in \textbf{P2}, is always included in the future light cone of the measurement; hence, the state relative to the MIO of that region must be used in this case (see Fig.~\ref{f.issue2a}). 


However, the above prescription is inconsistent for $n$-point functions evaluated outside the light cone. To show this, let us consider $w(x_1,x_2)$ with $x_1$ and $x_2$ outside the future of the measurement and such that the overlap of their causal futures is fully contained in the future of the measurement, as in Fig.~\ref{f.issue2b}. As both points lay outside the future of the measurement, PGM's prescription tells us to use \textbf{P1} and calculate $w(x,y)$ by the state of the MIO of the observer laying outside the future of the measurement. However, the processing region of these points is contained in the future light cone of the measurement. Hence, any observer computing the two-point function has the knowledge given by the selective update and should use {\bf P2} instead. Therefore, \textbf{P1} and  \textbf{P2} give conflicting prescriptions for this scenario.

Another example in which the above prescription is ill-defined is when the causal future of the two points is only partially contained within the future of the measurement, as in Fig.~\ref{f.issue2c}. In this case, the processing region can be contained inside and/or outside the causal future of $M$, hence leading to an additional ill-definiteness of \textbf{P1}-\textbf{P2} related to the need to say where the observer is while computing the correlation function. In other words, $n$-point functions also acquire an explicit observer dependency and are thus manifestly as subjective as the density operators we started from. Hence, this example shows that PGM's prescription is insufficient to fully remove contextuality, and that a shift in the mathematical language may be needed.

In Section~\ref{s.algebraic states}, we propose employing algebraic tools for addressing the first issue, and we discuss their effectiveness in resolving the inconsistencies related to $n$-point functions.

\begin{figure}
         \includegraphics[width=\columnwidth]{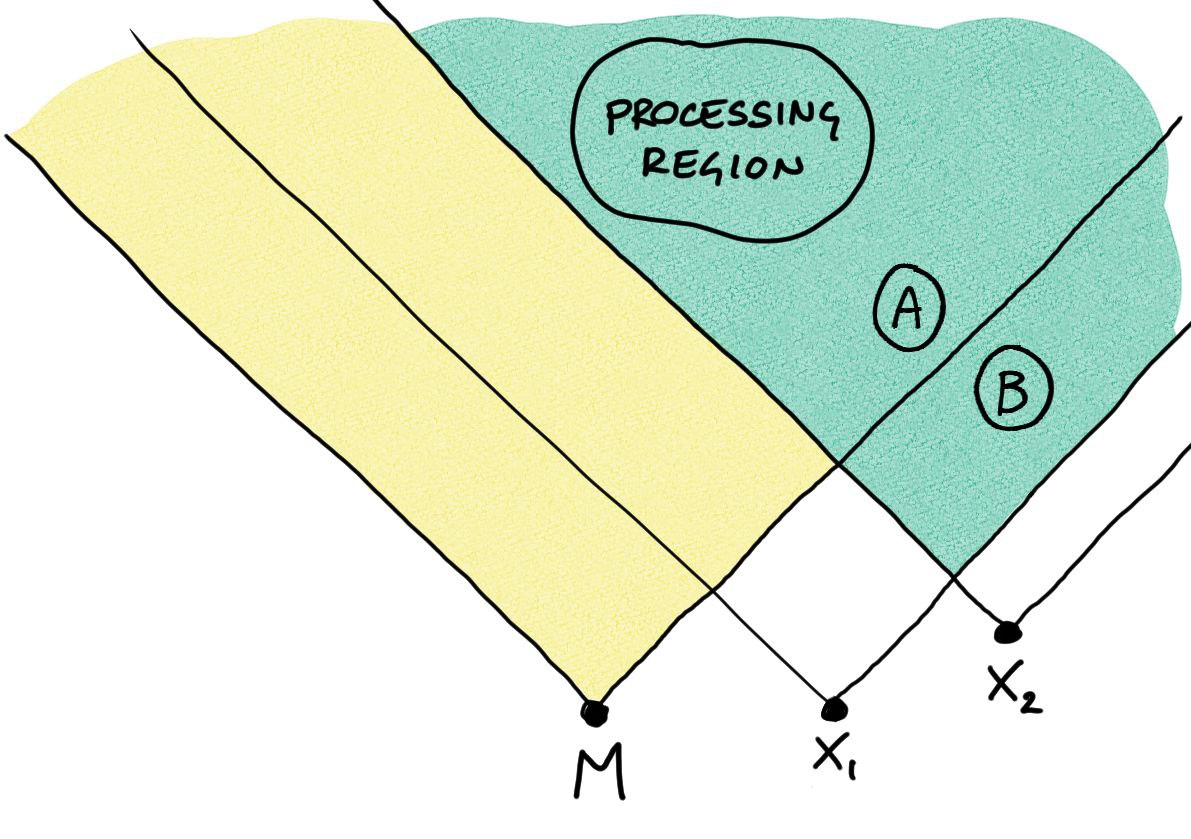}
         \caption{A further selection of points $x_1$ and $x_2$ for which calculating $w(x_1, x_2)$ via \textbf{P1}-{\bf P2} is problematic. The portion of spacetime amenable to containing a processing region (green) overlaps the causal future of the measurement $M$ (yellow) without being contained in it. Consequently, an observer in subregion $A$ would compute $w(x_1, x_2)$ by \textbf{P2}, while one in subregion $B$ would use \textbf{P1}, hence leading to another ill-definiteness of the prescription.}
         \label{f.issue2c}
\end{figure}

\subsection*{Notation and Conventions}
Throughout the article, we use the following notation and conventions. We denote spacetime points with lowercase letters (e.g. $x$, $y$, etc.) and their spatial part by bold lowercase letters (e.g. $\mathbf{x}$, $\mathbf{y}$, etc.). Sets of spacetime points are called regions and denoted by uppercase letters (e.g. $A$, $B$, etc.). As it will be discussed in Sec.~\ref{s.measurements_events}, measurements are also spacetime regions and hence are denoted by uppercase letters. The spacetime is Hausdorff, time-orientable and has metric with signature $(-,+,\dots)$. Moreover, we need to assume the dimension of the spacetime to be larger than two for technical reasons which will become clear in Sec.~\ref{s.second_solution}. The causal future of a spacetime region $A$, denoted by $\mathcal{J}^+(A)$, is the set of all points which can be reached from $A$ by future-directed non-spacelike curves, and the domain of future dependence of $A$, denoted by $\mathcal{D}^+(A)$, is the set of all points $x$ such that every past-inextendible non-spacelike curve through $x$ intersects $A$. The causal past $\mathcal{J}^-(\cdot)$ and domain of past dependence $\mathcal{D}^-(\cdot)$ are defined similarly~\cite{HawkingEllis73}. From these, one builds the regions $\mathcal{J}(A)=\mathcal{J}^+(A)\cup\mathcal{J}^-(A)$, $\mathcal{D}(A)=\mathcal{D}^+(A)\cup\mathcal{D}^-(A)$, and $\mathcal{S}_A=\mathcal{M}\setminus\mathcal{J}(A)$. Finally, observers are denoted by calligraphic capital letters (e.g. $\mathcal{A}$ for Alice, $\mathcal{B}$ for Bob, etc.).

\section{Refined detector-based measurements in QFT}
\label{s.setup}

The Unruh-DeWitt model for particle detectors was introduced as a tool for studying the notion of particles in QFT in curved and non-inertial spacetimes~\cite{Unruh76,DeWitt80,BirrellD82, Wald95, CrispinoEtAl08}. For the purpose of this article, it is sufficient to consider the simple setting of a bipartite system composed of a two-level Unruh-DeWitt detector $M$ with non-degenerate free Hamiltonian
\begin{equation}
    \hat{H}_{M}=\omega\ket{1}\bra{1}~,
\end{equation}
and a real massive scalar quantum field $\phi$ with Hamiltonian $\hat{H}_\phi$. In addition to the global free Hamiltonian
\begin{equation}
    \hat{H}=\hat{H}_{M}+\hat{H}_\phi
    \label{H_free}
\end{equation}
the detector and the field interact via the (detector's proper) time-dependent Hamiltonian
\begin{equation}
    \hat{H}_{\mathrm{int}}(\tau)=\chi(\tau)\hat{m}(\tau)\otimes\hat{\phi}_f(\tau)~,
    \label{H_int}
\end{equation} 
where $\chi(\tau)$ is a smooth switching function describing the interaction's duration and intensity,
\begin{equation}
\hat{m}(\tau)=e^{\mathrm{i}\omega \tau}\ket{1}\bra{0}+e^{-\mathrm{i}\omega \tau}\ket{0}\bra{1}
\end{equation}
is the detector's monopole momentum operator in the interaction picture, and $\hat{\phi}_f(\tau)$ is a spatially
smeared field operator describing the detector's spatial profile~\cite{Schlicht04, LoukoS06, Satz07} along the  detector's world line
\begin{equation}
    x(\tau)=(t(\tau),\mathbf{x}(\tau))~.
\end{equation}
Finally, $\hat{H}_{\mathrm{int}}(\tau)$ is often multiplied by a small constant $\lambda$ to make the interaction weak and allow the usage of perturbation theory. In PGM's framework, these detectors are used as a probe to test the field; in the rest of this section, we discuss how this is made, refining some details about the model along the way.

\subsection{Non-relativistic measurements}
In order to frame our later discussion, it is crucial to have a robust description of non-relativistic measurements\footnote{In this work, we call non-relativistic all those systems for which the usual postulates of QM (including the measurement one) hold valid for all practical purposes.} and their reading as sets of spacetime events. In non-relativistic QM, one can introduce a measurement by the so-called positive operator-valued measure (POVM): a map $O$ from the family of parts $\mathcal{M}$ of the set of possible outcomes of the measurement $R=\{E_0,E_1\dots\}$ to the set of Hermitian operators acting on the Hilbert space of the measured system, i.e.
\begin{equation}
    O:\mathcal{M}\longrightarrow op(\mathcal{H})~,
\end{equation}
such that
\begin{align}
    &\hat{O}(E_l)\equiv\hat{E}_l\geq 0\label{POVM1}\\
    &\hat{O}(R)=\mathbb{I}\label{POVM2}\\
    &\hat{O}(\cup_l E_l)=\sum_l \hat{O}(E_l)=\sum_l \hat{E}_l\label{POVM3}
\end{align}
(see e.g. Ref.~\cite{HeinosaariZ11}). In particular, since the measurement operators are positive, we can always find a set of operators $\hat{M}_i$ such that
\begin{equation}
    \hat{E}_i=\hat{M}^\dagger_i\hat{M}_i~,
\end{equation}
and hence rewrite Eq.~\eqref{POVM2} as
\begin{equation}
    \sum_i \hat{M}^\dagger_i\hat{M}_i=\mathbb{I}~.
    \label{e.MeasIdentity}
\end{equation}

Although the above framework provides a mathematical description of measurements, it does not encompass all the physical aspects involved in actual measurements performed in labs. Specifically, a quantum measurement involves a collapse phenomenon that extends beyond POVMs alone. Nevertheless, one can define a minimal description of measurements via POVMs as follows. For any element $E_i\in R$, the number 
\begin{equation}
    p^{O}_{\rho}(E_i)=\Tr[\rho \hat{O}(E_i)]
\end{equation}
is the probability that an experimental observation associated with the measurement $O$ of the system in the state $\rho$ gives the result $E_i$. If the outcome $E_i$ is realized, the state of the system makes the (non-unitary) jump
\begin{equation}
    \rho\longmapsto\rho_i=\frac{\hat{M}_i\rho\hat{M}_i^\dagger}{\Tr[\hat{M}_i\rho\hat{M}_i^\dagger]}~.
    \label{e.collapse}
\end{equation}
The update of the system's state via Eq.~\eqref{e.collapse} is usually called measurement-induced state collapse, or L\"uders rule.

\subsection{Detector-based measurements as sets of events}
\label{s.measurements_events}
PGM's framework utilises non-relativistic measurements to establish a measurement procedure in the context of QFT. This is achieved through non-relativistic systems, such as Unruh-DeWitt detectors, as mediators to measure quantum fields. Specifically, one lets a  non-relativistic system interact with a quantum field and measures the former to extract information about the latter. Here, we refine PGM's measurements on quantum fields by considering all relevant parts of non-relativistic measurements in the description, including the (unspecified) process that generates the measurement output. Therefore, a detector-based measurement on a quantum field is composed of three processes:
\begin{enumerate}
    \item the detector and the field are entangled via some unitary interaction. This allows for later extraction of information about the field by measurements performed on the detector via the von Neumann scheme~\cite{vonNeumann55, Ozawa84}. The spacetime region containing (at least) all events comprising this interaction is called $M_c$. Because we require the switching function $\chi$ and the spatial smearing $f$ to have compact support, this region is closed.
    \item the POVM performed on the detector and the related output production, which is the process making the (non-unitary) transition \eqref{e.collapse} to occur. Different interpretations of QM disagree on the specific microscopic realization of the output production~\cite{Maudlin95}; for the sake of generality, here we only assume there is some kind of unspecified process realizing \eqref{e.collapse}, which can be represented by a collection of events included in a spacetime region $M_o$. Because this region is contained in $\mathcal{J}(M_c)$, we can take it to be closed.
    \item a delay, necessary for making a large enough region of spacetime (typically having the size around that of the detector) aware of the obtained result. The delay is defined as a set $M_d$ such that every $p\in M_d$ satisfying $\mathcal{J}^+(p)\cap M_d=\{p\}$ is contained in the future of the future endpoint of lightlike curves on the boundary of $\mathcal{D}^+(M_o)$. Because the future endpoint of lightlike curves on the boundary of $\mathcal{D}^+(M_o)$ is a closed set, we can take $M_d$ to be closed.
\end{enumerate}
Notice that the relation $M_o\subset \mathcal{J}^+(M_c)$ must be satisfied. For the sake of simplicity, we assume that the three spacetime subsets are connected, and any measurement $M$ is simply the union $M=M_c\cup M_o\cup M_d$. Being the union of closed sets, $M$ is closed. Furthermore, we call $\mathcal{P}^+_M$ the region in which the classical information about the measurement's outcome is available, which corresponds to the causal future of all points $p\in M$ satisfying $\mathcal{J}^+(p)\cap M=\{p\}$. In analogy, we also define $\mathcal{P}^-_M$ as the causal past of all points $p\in M$ satisfying $\mathcal{J}^-(p)\cap M=\{p\}$. An example of a measurement's spacetime structure is represented in Fig.~\ref{f.regions}.

\begin{figure}
    \centering
    \includegraphics[width=0.8\columnwidth]{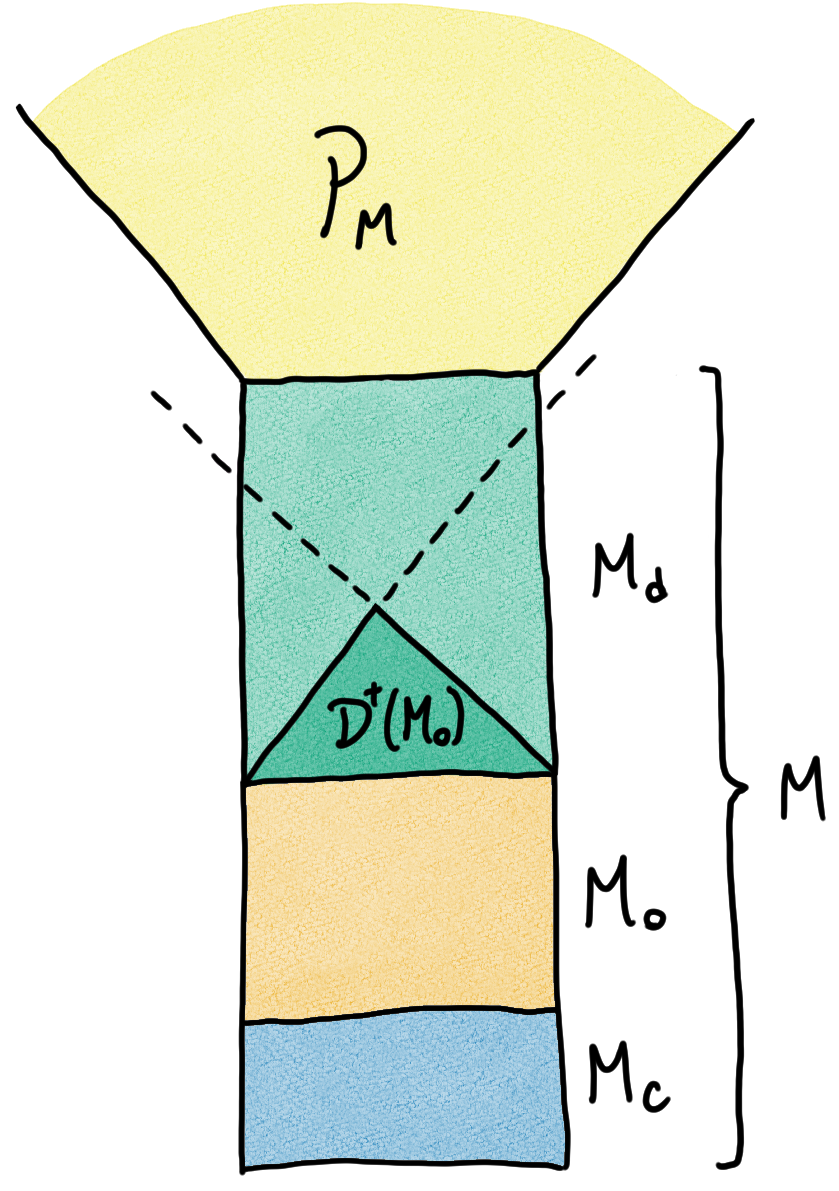}
    \caption{Spacetime structure of a measurement. The measurement is composed of the detector-field coupling $M_c$ (blue), the output production (orange) and a delay region $M_d$ (green) built in such a way that all points whose future contains no measurement points have access to the measurement output. To obtain this, it is sufficient that these are in the future of the "latest" point of $\mathcal{D}^+(M_o)$ (dark green), whose future light cone is represented by dotted lines. Finally, the region $\mathcal{P}^+_M$ (yellow) is a collection of events where the measurement output is accessible.}
    \label{f.regions}
\end{figure}

\section{Two detectors: a case study}
\label{s.state_assignment}

Following PGM's analysis in Ref.~\cite{Polo-GomezEtAl21}, let us now consider the case where two detectors $A$, $B$ interact with the same massive real scalar field $\phi$. Given a spacelike foliation of the spacetime $\Sigma_t$ parametrized by $t$, the interaction Hamiltonian of the two detectors with the field is taken to be
\begin{equation}
    \hat{H}_{int}(t)=\hat{H}_A(\tau_A)+\hat{H}_B(\tau_B)~,
\end{equation}
where the interaction Hamiltonians are those of Eq.~\eqref{H_int} and the relation $\tau_{A,B}(t)$ is provided by $\Sigma_t$. In the following, we identify a detector with its measurement region and consider:
\begin{itemize}
    \item the future light cone $\mathcal{J}^+(I)$ of the detector $I$, for $I=A,B$;
    \item the future light cone of the events corresponding to the measurement output production and read-out $\mathcal{P}^+_I$, for $I=A,B$.
\end{itemize}

Following Ref.~\cite{Polo-GomezEtAl21}, we study the field as tested by a third detector $C$ when the detectors $A$ and $B$ are spacelike to each other, and neither $A$ nor $B$ are in the future of $C$. Three scenarios are then defined by the causal relation between $A$ and $B$ and the third detector. In particular, $C$ can be:
\begin{enumerate}
    \item spacelike separated from both $A$ and $B$, i.e. $C\subset\mathcal{S}_{AB}$: an observer measuring $C$ after its interaction with the field would have no information about the outcomes of measurements performed on $A$ or $B$.
    \item in $\mathcal{P}^+_A\setminus\mathcal{P}^+_B$ (or $\mathcal{P}^+_B\setminus\mathcal{P}^+_A$): an observer can access the measurement outcome of $A$ ($B$) but not that of $B$ ($A$).
    \item in $\mathcal{P}^+_A\cap\mathcal{P}^+_B$: an observer measuring $C$ can access both the outcomes obtained by $A$ and $B$.
\end{enumerate}
These three possibilities are summarized in Fig.~\ref{fig.cases}, where the relevant spacetime regions are highlighted by different colours. 

Working under the assumption that no interaction happened before the measurements, the composite system made of the detectors and the field can be taken to initially be in the pure state
\begin{equation}
\label{e.init}
    \ket{\psi}= \ket{\phi_0}\otimes\ket{\alpha}\otimes\ket{\beta}~.
\end{equation}
Hence, the initial state of the field in the density operator formalism is
\begin{equation}
    \rho_{\phi_0}=\ket{\phi_0}\bra{\phi_0}~.
\end{equation}
Notice that while this state depends on the specific slicing used to define the time evolution of spacelike sub-manifolds, there is always a way to select a Cauchy slice in the past of the interactions (i.e. that it does not intersect the detector regions) for which the state of the field is a projector. For later convenience, we also define the initial detectors' density operators
\begin{equation}
    \sigma_\kappa=\ket{\kappa}\bra{\kappa}
\end{equation}
for $\kappa=\alpha,\beta$; therefore, the state in Eq.~\eqref{e.init} can also be expressed as
\begin{equation}
    \rho_0=\rho_{\phi_0}\otimes\sigma_\alpha\otimes\sigma_\beta~.
\end{equation}
Moreover, since $A$ and $B$ are spacelike separated, the operators in $\hat{H}_A(\tau_A),\hat{H}_B(\tau_B)$ commute, hence the 
the time evolution operator $U(t;t_0)$ in the interaction picture satisfies
\begin{equation}
    U(t;t_0)=U_A(t;t_0)U_B(t;t_0)~,
\end{equation}
with $[U_A(t;t_0),U_B(t;t_0)]=0$, regardless of the slicing considered. Furthermore, it is worth noting that different slicing choices give different field and detector states at the same spatial location: while surprising, it will be clear from the later discussion that this does not affect physical results, and, in our proposal, all these states are related by an equivalence relation making the slicing choice irrelevant.

Before analysing the above three cases (already presented in a slightly different fashion in Ref.~\cite{Polo-GomezEtAl21}), let us briefly mention that these do not exhaust all possible regions where one can find $C$. For example, the spacetime strips $\mathcal{J}^{+}(I)\setminus\mathcal{P}^+(I)$, $I=A,B$, are not covered by the above discussion. This is due to the difficulty of modelling contemporary interactions. The case in which $C$ is in these spacetime regions will not be considered for the rest of the article.

We now study how to assign a state for a scalar quantum field $\phi$ in the case of the detector $C$ being measured when in the regions
\begin{enumerate}
    \item $C\subset\mathcal{S}_{AB}$
    \item $C\subset\mathcal{P}^+_A\setminus\mathcal{P}^+_B$ and $C\subset\mathcal{P}^+_B\setminus\mathcal{P}^+_A$
    \item $C\subset\mathcal{P}^+_A\cap\mathcal{P}^+_B$
\end{enumerate}
The only region having a clear causal relation with $A$ and $B$ left out of our analysis are $\mathcal{P}^-_A$ and $\mathcal{P}^-_B$, for which the discussion is trivial. We remark that the discussion of the next three sections (Sec.s~\ref{s.first_case} to \ref{s.third_case}) closely follows that reported in Sec.~VIII of Ref.~\cite{Polo-GomezEtAl21}, the only differences being the notation used, some of the formal tools employed (see App.s~\ref{a.ProofFactoringMeansIdentity} and \ref{a.proof5689}), and the interpretation we give at the same results.

\begin{figure}[t]
\centering
    \includegraphics[width=\columnwidth]{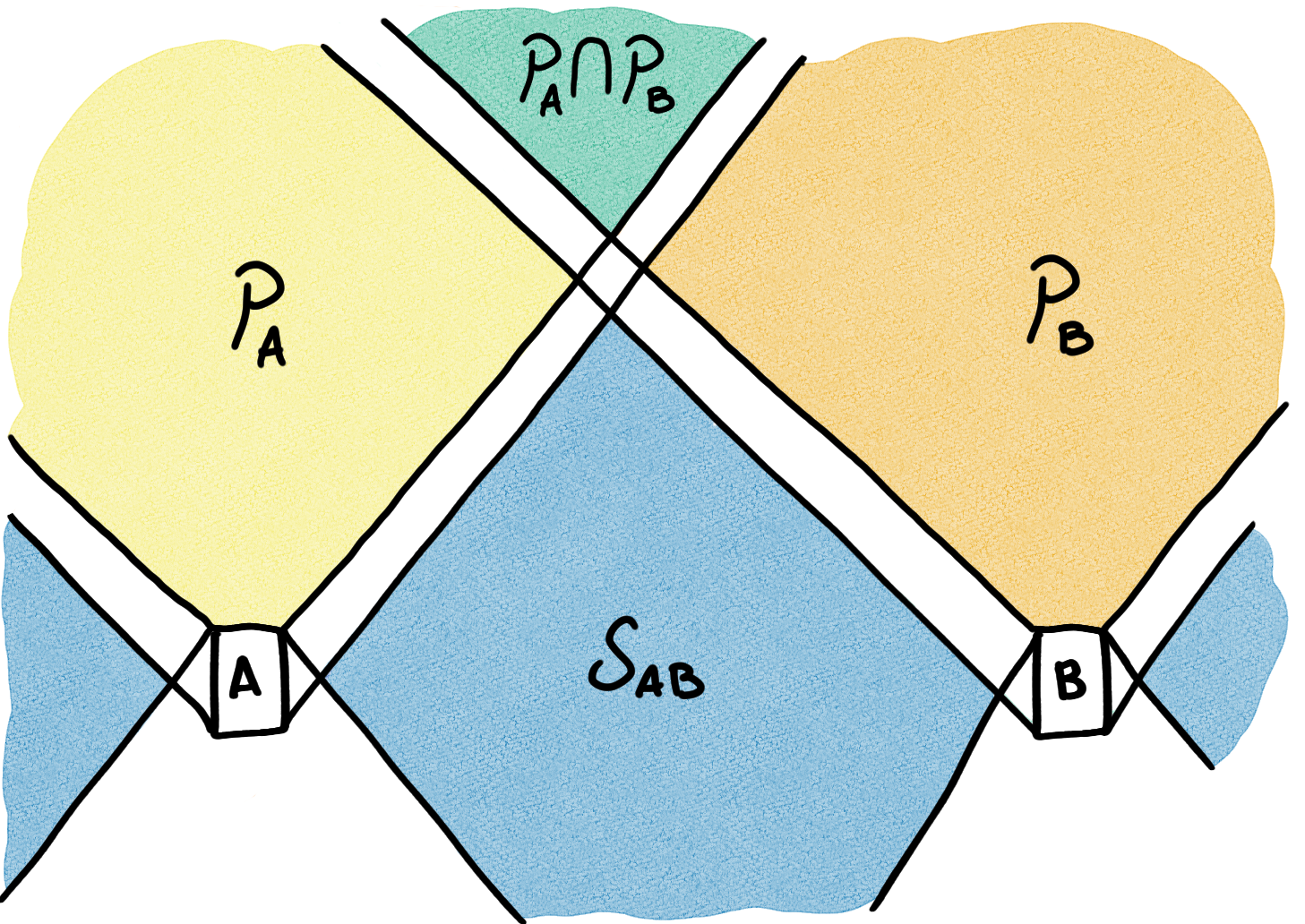}
    \caption{Causal diagram for two space-like separated detectors. The blue region, called $\mathcal{S}_{AB}$, is the region space-like separated from the detectors $A$ and $B$. The yellow (orange) region, called $\mathcal{P}^+_A$ ($\mathcal{P}^+_B$), is the future light cone of a set of points containing the events where the output of a measurement performed on $A$ ($B$) is known. The green region, called $\mathcal{P}^+_A\cap\mathcal{P}^+_B$, is the set of points that are in the future light-cone of the output production of both $A$ and $B$.}
    \label{fig.cases}
\end{figure}

\subsection{\texorpdfstring{$C\subset\mathcal{S}_{AB}$}{C in S[AB]}}
\label{s.first_case}
First, we consider the case when $C$ is spacelike to $A$ and $B$, i.e. $C\subset \mathcal{S}_{AB}$. In this case, the possible states of knowledge of $\mathcal{O}_C$ about the measurements performed on $A$ and $B$ are:
\begin{enumerate}[series=poss1, label=\textbf{\arabic*)}]
    \item $\mathcal{O}_C$ does not know $\mathcal{O}_A$ and $\mathcal{O}_B$ performed measurements \\$\Rightarrow$ they describe the field's state as $\rho_{\bf 1}\equiv\rho_{\mathcal{S}_{AB}}=\rho_{\phi_0}$;
    \item $\mathcal{O}_C$ only knows $\mathcal{O}_A$ performed a measurement (but not the outcome) \\$\Rightarrow$ they describe the field's state as $\rho_{\bf 2}\equiv\rho^A_{\mathcal{S}_{AB}}$;
    \item $\mathcal{O}_C$ only knows $\mathcal{O}_B$ performed a measurement (but not the outcome) \\$\Rightarrow$ they describe the field's state as $\rho_{\bf 3}\equiv\rho^B_{\mathcal{S}_{AB}}$;
    \item $\mathcal{O}_C$ knows that both $\mathcal{O}_A$ and $\mathcal{O}_B$ performed measurements (but not the outcomes) \\$\Rightarrow$ they describe the field's state as $\rho_{\bf 4}\equiv\rho^{AB}_{\mathcal{S}_{AB}}$;
\end{enumerate}
The states related to the above points \textbf{2} and \textbf{3} are obtained by the partial trace over $\mathcal{H}_A$ and $\mathcal{H}_B$ of the state of as it is described by $\mathcal{O}_C$. For example, in the case \textbf{2} the state before the trace is
\begin{equation}
\left(\sum_a\mathbb{I}\otimes\hat{M}_a(\hat{U}_A\rho_{\phi_0}\otimes\rho_{\alpha}\hat{U}_A^\dagger)\mathbb{I}\otimes\hat{M}^\dagger_a\right)\otimes\rho_{\beta}\otimes\rho_{\gamma}
\label{e.state_sum_S_AB}
\end{equation}
which, after the partial trace, gives
\begin{equation}
    \rho_{\bf 2}= \rho^A_{\mathcal{S}_{AB}}=\Tr_A[\hat{U}_A(\rho_{\phi_0}\otimes\rho_{\alpha})\hat{U}_A^\dagger]~;
\end{equation}
similarly, in case \textbf{3} one gets
\begin{equation} 
    \rho_{\bf 3}= \rho^B_{\mathcal{S}_{AB}} =\Tr_B[\hat{U}_B(\rho_{\phi_0}\otimes\rho_{\beta})\hat{U}_B^\dagger]~.
\end{equation}
Finally, the density operator $\rho_{\bf 4}$ is obtained by the same procedure, where now the states of both $A$ and $B$ are those obtained by a non-selective update rule, hence giving
\begin{equation}
    \rho_{\bf 4}=\Tr_{AB}[\hat{U}_A\hat{U}_B(\rho_{\phi_0}\otimes\rho_{\alpha}\otimes\sigma_\beta)\hat{U}_A^\dagger\hat{U}^\dagger_B]~.
\end{equation}

For the picture to be consistent, it must be that all locally supported field observables $\mathcal{O}_C$ one can measure have the same expectation value when evaluated via the density operators \textbf{1-4}, i.e. 
\begin{equation}
    \Tr[\rho_{\bf 1}\hat{L}]=\Tr[\rho_{\bf 2}\hat{L}]=\Tr[\rho_{\bf 3}\hat{L}]=\Tr[\rho_{\bf 4}\hat{L}]
    \label{e.LinC}
\end{equation}
when $\hat{L}$ is an observable supported in $\mathcal{S}_{AB}$. Specifically, we call $\mathfrak{O}(R)$ the set of all observables whose support is causally convex\footnote{The requirement for the region to be causally convex comes from the fact that we require information and fields to be local, as customary in the literature on local quantum physics~\cite{Haag92} (see also Ref.~\cite{FewsterV18}).} and contained in the region $R$. The validity of the above chain of equalities was obtained in Ref.~\cite{Polo-GomezEtAl21} by using the cyclic property of the trace and the fact that, since $A$, $B$ and $\mathcal{S}_{AB}$ are space-like separated, $[\hat{U}_A,\hat{U}_B]=0$ and
\begin{equation}
    [\hat{L},\hat{U}_A]=[\hat{L},\hat{U}_B]=0~.
    \label{e.LUcommutator}
\end{equation}

Hence, an observer that is spacelike to both $A$ and $B$ can use any density operator among \textbf{1}-\textbf{4} and get the right expectation values. More generally, any density operator in the equivalence class
\begin{equation}
    \varrho_{\mathcal{S}_{AB}}=\{\rho ~|~ \Tr[\rho\hat{L}]=\Tr[\rho_{\phi_0}\hat{L}],~\forall\hat{L}\in\mathfrak{O}(\mathcal{S}_{AB})\}
    \label{e.EquivClassC}
\end{equation}
gives the correct physical results and can be used as the field's density operator.

As it is clear from the above discussion about the possible states of knowledge of $\mathcal{O}_C$, the elements of $\varrho_{\mathcal{S}_{AB}}$ are contextual to the observers in $\mathcal{S}_{AB}$. In particular, assigning a field state $\rho\in\varrho_{\mathcal{S}_{AB}}$ is not effective when considering single shot experiments (for, in this case, one cannot average over the ignorance of $\mathcal{O}_C$, i.e. over the many realisations appearing in the sum \eqref{e.state_sum_S_AB}) or correlations between $\mathcal{S}_{AB}$ and any region within the causal futures of the measurements (because these depend on the state of the field outside the region of validity of the state assignment). In other words, the above field states are only valid as far as it concerns the computation of expectation values of many \textit{i.i.d.} experiments localised in $\mathcal{S}_{AB}$. As we will see in Sec.~\ref{s.Bell_states_collapse}, this is similar to what happens when assigning the state $\rho_A={\rm Tr}[\rho_{AB}]=\frac{1}{2}\mathbb{I}$ to a particle $A$ composing a Bell pair with another particle $B$: the assignment can only provide reasonable statistics for local measurements performed on the partition defined by $A$ of many replicas of the Bell pair $AB$ while, on single-shot experiments, the state of $A$ should account for the perfect correlations between the outcomes of spacelike measurement performed on $A$ and $B$. This remark will become important in later sections when discussing the possibility of finding a state assignment to provide $n$-point functions evaluated on points laying across measurements' lightcones.

\subsection{\texorpdfstring{$C\subset\mathcal{P}^+_A\setminus\mathcal{P}^+_B$ and $C\subset\mathcal{P}^+_B\setminus\mathcal{P}^+_A$}{C in P[A]/P[B] and C in P[B]/P[A]}}
\label{s.second_case}
Next, let us consider the case in which the detector $C$ is space-like to $B$ and in the causal future of $\mathcal{P}^+_A$, i.e. when $C\subset\mathcal{P}^+_A\setminus\mathcal{P}^+_B$. In addition to the possible contextual field's states of knowledge \textbf{1}-\textbf{4} above, an observer in this region can also describe the field state in accordance with the following:
\begin{enumerate}[resume=poss1, label=\textbf{\arabic*)}]
    \item $\mathcal{O}_C$ only knows $\mathcal{O}_A$ performed a measurement and the outcome $a$\\$\Rightarrow$ they describe the field's state as $\rho_{\bf 5}= \rho^{A(a)}_{\mathcal{P}^+_A}$;
    \item $\mathcal{O}_C$ knows $\mathcal{O}_A$ performed a measurement and the outcome $a$, and knows $\mathcal{O}_B$ performed a measurement but not the outcome\\$\Rightarrow$ they describe the field's state as $\rho_{\bf 6}=\rho^{A(a)B}_{\mathcal{P}^+_A}$;
\end{enumerate}
The above states are obtained by the L\"uders rule as
\begin{equation}
    \rho^{A(a)}_{\mathcal{P}^+_A}=\frac{\Tr_A[\hat{M}_a\hat{U}_A(\rho_{\phi_0}\otimes\sigma_\alpha)\hat{U}^\dagger_A\hat{M}^\dagger_a]}{\Tr[\hat{M}_a\hat{U}_A(\rho_{\phi_0}\otimes\sigma_\alpha)\hat{U}^\dagger_A\hat{M}^\dagger_a]}
    \label{e.contextAnoB}
\end{equation}
and
\begin{equation}
    \rho^{A(a)B}_{\mathcal{P}^+_A}
    =\frac{\Tr_{AB}[\hat{M}_a\hat{U}_B\hat{U}_A(\rho_{\phi_0}\otimes\sigma_\alpha\otimes\sigma_\beta)\hat{U}^\dagger_A\hat{U}^\dagger_B\hat{M}^\dagger_a]}{\Tr[\hat{M}_a\hat{U}_B\hat{U}_A(\rho_{\phi_0}\otimes\sigma_\alpha\otimes\sigma_\beta)\hat{U}^\dagger_A\hat{U}^\dagger_B\hat{M}^\dagger_a]}
    \label{e.contextAsomeB}
\end{equation}
where 
we used the cyclic property of the trace and Eq.~\eqref{e.MeasIdentity}. 

Clearly enough, one could consider the case where $C$ lies in $\mathcal{P}^+_B\setminus\mathcal{P}^+_A$, and get similar results. In this case, the possible contextual field's state is one of the \textbf{1}-\textbf{4} above or one of the following:
\begin{enumerate}[resume=poss1, label=\textbf{\arabic*)}]
    \item $\mathcal{O}_C$ only knows $\mathcal{O}_B$ performed a measurement and the outcome $b$\\$\Rightarrow$ they describe the field's state as $\rho_{\bf 7}=\rho^{B(b)}_{\mathcal{P}^+_B}$;
    \item $\mathcal{O}_C$ knows $\mathcal{O}_B$ performed a measurement and the outcome $b$, and knows $\mathcal{O}_A$ performed a measurement but not the outcome\\$\Rightarrow$ they describe the field's state as $\rho_{\bf 8}=\rho^{AB(b)}_{\mathcal{P}^+_B}$;
\end{enumerate}
One obtains the states \textbf{7} and \textbf{8} in the same way as (\ref{e.contextAnoB}-\ref{e.contextAsomeB}) but with different upper and lower indices. The compatibility between \textbf{5} and \textbf{6} (\textbf{7} and \textbf{8}) was proven in Ref.~\cite{Polo-GomezEtAl21} by the same tools used for proving the equivalence of \textbf{1}-\textbf{4}. 

However, \textbf{5}-\textbf{6} are not compatible with \textbf{1}-\textbf{4} (while this could be proven also for \textbf{7}-\textbf{8}, here we limit our analysis to the former for the sake of simplicity). To prove this statement, we consider some field operator $\hat{L}\in\mathfrak{O}(\mathcal{P}^+_A\setminus\mathcal{P}^+_B)$ and confront its expectation value as observed by the observer $\mathcal{O}_A$ (having maximal information available, hence using the density operator $\rho_{\bf 6}$ or, equivalently, $\rho_{\bf 5}$) and $\mathcal{O}_C$, using any density operator in the equivalence class \eqref{e.EquivClassC}. As claimed above, when using $\rho\in\varrho_{\mathcal{S}_{AB}}$ the expectation value of an operator $\hat{L}$ is
\begin{equation}
    \langle \hat{L}\rangle_{\rho}=\Tr[\rho_{\phi_0}\hat{L}]~,
\end{equation}
which can be expressed equivalently by means of $\rho_{\bf 2}$ as
\begin{equation}
    \langle \hat{L}\rangle_{\bf 2}=\Tr[\hat{U}_A(\rho_{\phi_0}\otimes\sigma_\alpha)\hat{U}^\dagger_A\hat{L}]~.
\end{equation}
On the contrary, using $\rho_{\bf 5}$ gives
\begin{equation}
    \langle \hat{L}\rangle_{\bf 5}= \frac{\Tr[\hat{M}_a\hat{U}_A(\rho_{\phi_0}\otimes\sigma_\alpha)\hat{U}^\dagger_A\hat{M}^\dagger_a\hat{L}]}{\Tr[\hat{M}_a\hat{U}_A(\rho_{\phi_0}\otimes\sigma_\alpha)\hat{U}^\dagger_A\hat{M}^\dagger_a]}~.
\end{equation}
By calling $\rho'=\hat{U}_A\rho_{\phi_0}\otimes\sigma_\alpha\hat{U}^\dagger_A$ and $\hat{M}_a\hat{M}_a^\dagger=\hat{\Gamma}$, we get that 
\begin{equation}
    \langle \hat{L}\rangle_{\bf 2}= \langle \hat{L}\rangle_{\bf 5}~\Leftrightarrow~\Tr[\rho'\hat{L}\hat{\Gamma}]=\Tr[\rho'\hat{L}]\times \Tr[\rho'\hat{\Gamma}]~,
\end{equation}
for all $\hat{L}$, which is only satisfied when $\hat{\Gamma}=\mathbb{I}$ (see App.~\ref{a.ProofFactoringMeansIdentity}), and hence
\begin{equation}
    \langle \hat{L}\rangle_{\bf 2}= \langle \hat{L}\rangle_{\bf 5}~\Leftrightarrow~\hat{M}^\dagger_a\hat{M}_a\propto\mathbb{I}~.
    \label{e.FactoringMeansIdentity}
\end{equation}
However, in order to describe a non-trivial measure, the measurement operator must satisfy $\hat{M}^\dagger_a\hat{M}_a\neq k\mathbb{I}$, where $k$ is some proportionality constant. Hence, the density operators of \textbf{5}-\textbf{6} (and \textbf{7}-\textbf{8}) are not equivalent to those of Eq.~\eqref{e.EquivClassC}. More generally, by defining the equivalence classes
\begin{equation}
     \varrho_{\mathcal{P}^+_A\setminus\mathcal{P}^+_B}=\{\rho ~|~ \Tr[\rho\hat{L}]=\Tr[\rho_{\bf 6}\hat{L}],~\forall\hat{L}\in\mathfrak{O}(\mathcal{P}^+_A\setminus\mathcal{P}^+_B)\}
    \label{e.EquivClassA/B}
\end{equation}
and
\begin{equation}
    \varrho_{\mathcal{P}^+_B\setminus\mathcal{P}^+_A}=\{\rho ~|~ \Tr[\rho\hat{L}]=\Tr[\rho_{\bf 8}\hat{L}],~\forall\hat{L}\in\mathfrak{O}(\mathcal{P}^+_B\setminus\mathcal{P}^+_A)\}
    \label{e.EquivClassB/A}
\end{equation}
we get that the expectation values of operators having support in $\mathcal{P}^+_A\setminus\mathcal{P}^+_B$ and $\mathcal{P}^+_A\setminus\mathcal{P}^+_B$ evaluated by means of a representative of \eqref{e.EquivClassA/B} and \eqref{e.EquivClassB/A} respectively are different from those obtained by using a representative of $\varrho_{\mathcal{S}_{AB}}$. Therefore, observers that enter $\mathcal{P}^+_A\setminus\mathcal{P}^+_B$ or $\mathcal{P}^+_B\setminus\mathcal{P}^+_A$ cannot use the density operators of \eqref{e.EquivClassC} and get the right statistics.

\subsection{\texorpdfstring{$C\subset\mathcal{P}^+_A\cap\mathcal{P}^+_B$}{C in P[A] cap P[B]}}
\label{s.third_case}
Finally, we consider the case in which the detector $C$ is in the causal future of both $\mathcal{P}^+_A$ and $\mathcal{P}^+_B$, i.e. $C\subset \mathcal{P}^+_A\cap\mathcal{P}^+_B$. In this case, the contextual state of knowledge about the field can be one of the above \textbf{1-8} or
\begin{enumerate}[resume=poss1, label=\textbf{\arabic*)}]
    \item $\mathcal{O}_C$ knows the outcomes of the measurements performed by $\mathcal{O}_A$ and $\mathcal{O}_B$, which are $a$ and $b$ respectively\\$\Rightarrow$ they describe the field's state as $\rho_{\bf 9}=\rho^{A(a)B(b)}_{\mathcal{P}^+_A\cup\mathcal{P}^+_B}$;
\end{enumerate}
However, no density operator in the equivalence classes $\varrho_{\mathcal{S}_{AB}}$, $\varrho_{\mathcal{P}^+_A\setminus\mathcal{P}^+_B}$ or $\varrho_{\mathcal{P}^+_B\setminus\mathcal{P}^+_A}$ is able to give the same expectation values as those obtained by $\rho_{\bf 9}$, i.e.
\begin{equation}
    \langle \hat{L}\rangle_{\bf 9}\neq\langle \hat{L}\rangle_{\bf j}~,\forall {\bf j}=1,\dots,8
    \label{e.9isUnique}
\end{equation}
The proof of this fact is similar to that of the above paragraph, and it can be found in App.~\ref{a.proof5689}. As a consequence, we define a new equivalence class
\begin{equation}
    \varrho_{\mathcal{P}^+_B\cap\mathcal{P}^+_A}=\{\rho ~|~ \Tr[\rho\hat{L}]=\Tr[\rho_{\bf 9}\hat{L}],~\forall\hat{L}\in\mathfrak{O}(\mathcal{P}^+_B\cap\mathcal{P}^+_A)\}
\end{equation}
describing all the states that one can assign to the field while being in $\mathcal{P}^+_B\cap\mathcal{P}^+_A$.

\subsection{Addressing the first issue}
\label{s.density states}
The result of the above discussion is that density operators are not completely contextual to the observers using them. We can imagine observers who do not know the outcomes of the measurements performed in their past light cone, thus getting incorrect results when using non-updated density operators. As a remedy, we assign an equivalence class of density operators that can be interpreted as those contextual to the MIO of Sec.~\ref{s.issue1}. This prescription removes the epistemic uncertainty inherent in contextual states and thus solves the first issue. In this way, we may assign an equivalence class of density operators to each spacetime region belonging to the causal future of measurements, or to regions spacelike to the interaction between detectors and the field. 
 
Let us analyze this prescription in more detail. Assuming an initial state for the field (in our case, the vacuum $\rho_0$), each equivalence class contains a state obtained by the initial one via a map $\mathcal{E}^{(\textbf{a})}$ depending on all measurements' outcomes $\textbf{a}$ performed in the causal past of the region related to the class. Therefore, using these equivalence classes means removing the epistemic uncertainty on the field's state due to the subjective ignorance of non-MIO observers, while keeping the ontic one associated with the causality constraints of relativistic QM.

In other words, we define the equivalence class of states
\begin{equation}
    \varrho_{R}=\{\rho~|~\Tr[\rho\hat{L}]=\Tr[\mathcal{E}^{(\textbf{a})}(\rho_0)\hat{L}]~,~\forall\hat{L}\in\mathfrak{A}(R)\}
    \label{e.equivalenceclass}
\end{equation}
where all measurements giving the outcomes $\textbf{a}$ belong to $\mathcal{J}^-(R)$, and $\mathfrak{A}(R)$ is the set of all observables localised in the region $R$ considered. The map $\mathcal{E}^{(\textbf{a})}$ is directly constructed from the measurements and their outcomes. The simplest case is a single measurement $M$, with measurement operators $\hat{M}_i$ corresponding to the possible outcomes $i$, and Kraus operators 
\begin{equation}
    \hat{K}_i=\bra{\gamma}\hat{U}^\dagger_I\ket{i}~,
    \label{e.Kraus}
\end{equation}
where $\ket{\gamma}$ is the inital state of the detector, and $U_I$ is the time-evolution operator for the interaction between the detector and the field. In the case of selective measurements, the map $\mathcal{E}^{(\textbf{a})}$ with outcome ${\bf a}=(i)$ is defined as  
\begin{equation}
    \mathcal{E}^{(i)}(\rho_\phi)=\frac{\hat{K}_i\rho_\phi\hat{K}^\dagger_i}{p_i} \ , 
\end{equation}
where
\begin{equation}
    p_i=\sqrt{\Tr[\hat{M}_i\hat{U}_I(\rho_{\phi}\otimes\sigma_\gamma)\hat{U}_I^{\dagger}\hat{M}^\dagger_i]}~,
\end{equation}
while, for a non-selective measurement, the map is
\begin{equation}
    \mathcal{E}(\rho_\phi)=\sum_ip_i\mathcal{E}^{(i)}(\rho_\phi)=\sum_i\hat{K}_i\rho_\phi\hat{K}^\dagger_i \ .
\end{equation}
If more measurements are in the causal past of the considered region $R$, the appearing order of the Kraus operators in $\mathcal{E}^{({\bf a})}$ is determined by the causal ordering of the measurement regions in spacetime. Specifically, once we restrict to the case where all measurements have a well-defined causal relation, the causal structure of spacetime induces a partial order $\preceq$ between measurements: for any two given measurements $M_1$ and $M_2$, $M_1\preceq M_2$ iff 
$x\notin\mathcal{J}^+(M_2)$ holds $\forall x\in M_1$. When $M_1\preceq M_2$, we say that $M_2$ is later than $M_1$. Then, for any two measurements $M_1,M_2\in\mathcal{J}^-(R)$, it must either be $M_1\preceq M_2$ or $M_2\preceq M_1$, and the order of appearance of the Kraus operators is chronological, e.g. $\mathcal{E}^{(a_1, a_2)}(\rho)\propto \hat{K}_{a_2}\hat{K}_{a_1}\rho\hat{K}^\dagger_{a_1}\hat{K}^\dagger_{a_2}$ whenever $M_1\preceq M_2$.

Assigning an equivalence class of density operators instead of a single representative to a system is not new; the procedure is similar to assigning to systems vectors up to a global phase factor. The novelty of this proposal is in how the equivalence classes are constructed and the fact that they depend on the spacetime region in which the system is described in a way similar to that proposed by Hellwig and Krauss in Ref.~\cite{HellwigK70}. 

Before proceeding, we stress that we did not address the question of how to assign equivalence classes to the regions containing points that are lightlike to measurements and how the transition from a pre-measurement to a post-measurement equivalence class is realised; these questions are beyond the scope of this article.

\section{Detector-based update of algebraic field states}
\label{s.algebraic states}
In the previous section, we assigned an equivalence class of density operators to each region $R$ having a definite causal relation to all measurements. While solving the first issue, these equivalence classes leave the second unaddressed. This is because the prescription cannot answer questions about non-local quantum correlations and single-shot experiments (as briefly discussed at the end of Sec.~\ref{s.first_case}). To proceed further, we consider a more general version of QFT, called Haag–Kastler axiomatic framework for quantum field theory or, more simply, algebraic quantum field theory (AQFT). 

The choice to turn to AQFT has two additional precursors. First, the connection between detector-based measurements and AQFT was hinted at by PGM in \cite{Polo-GomezEtAl21}. Indeed, they already noted that to avoid the contextuality inherent in the description using density operators it was preferable to move towards a description based on $n$-point functions. Also, they argued for focusing on what experimenters can possibly measure as opposed to prioritizing the Hilbert space of QFT; this is the usual approach of AQFT. However, we find that the $n$-point functions obtained by PGM are still as contextual as the density operators used to calculate them (see Sec.~\ref{s.issue2}). 
Therefore, to render the measurement proposal intrinsically non-contextual, we take a step further and fully adopt the tools of AQFT. Our second motivation for the use of AQFT is that algebraic states are also a crucial part of the FV proposal for measurements in QFT~\cite{FewsterV18, FewsterV23}.


The reason why density operators are not enough to solve the second issue is that, in the previous discussion, we always considered the expectation values of observables belonging to the sets $\mathfrak{A}(R)$ of all observables whose support is contained in a region $R$ having a definite causal relation with all measurements. As it is clear, products of field operators evaluated on $n$ separate points do not always fall in this category, and this led to the inconsistencies presented in Sec.~\ref{s.issue2}. 
As we will see, algebraic states naturally solve this issue, providing a rule for computing generic $n$-point functions. 

\subsection{The GNS construction}
The primitive elements of Haag-Kastler's AQFT are the spacetime $\mathcal{M}$ and a unital $\ast$-algebra $\mathfrak{A}(\mathcal{M})$ defined on it, called the algebra of observables~\cite{HaagEtAl70,Haag92}. To each open causally convex\footnote{An open set $\mathcal{O}\subset\mathcal{M}$ is casually convex iff $\mathcal{O}$ contains all timelike geodesics from $x$ to $y$, for all $x,y\in\mathcal{O}$~\cite{Penrose72}.} bounded region $\mathcal{O}\subset M$ we associate a subalgebra $\mathfrak{A}(\mathcal{O})$ of $\mathfrak{A}(\mathcal{M})$, called the algebra of local observables (see Ref.~\cite{FewsterR19} for a pedagogical review, and Ref.~\cite{Haag92} for further details). In particular, we require that
\begin{equation}
    \mathfrak{A}(O_1)\subset \mathfrak{A}(O_2)
\end{equation}
for all $O_1\subset O_2$ (isotony), and 
\begin{equation}
    [\mathfrak{A}(O_1),\mathfrak{A}(O_2)]=0
\end{equation}
whenever $O_1$ and $O_2$ are spacelike separated (Einstein causality). In general, the elements of $\mathfrak{A}(O)$ have compact support in spacetime~\cite{StreaterW00}.

In this setting, a state $\omega$ is an element of the space $\mathfrak{N}(\mathcal{M})=\mathfrak{A}^*_{+\mathbb{I}}(\mathcal{M})$ of all normalised positive linear forms on $\mathfrak{A}(\mathcal{M})$, and $\omega(A)$ reads as the expectation value of $A\in\mathfrak{A}(\mathcal{M})$ over $\omega$. As it is clear, this notion of a state is very different from the one used in QM and QFT. Yet, it provides all the operationally accessible elements we expect from states of a physical quantum theory (namely, probability distributions related to measurement outcomes and all their momenta). The usual vector states are recovered from the algebraic ones via the so-called GNS construction. This is a way to associate any state $\omega$ with a triple $( \pi_\omega, \mathcal{H}_\omega,\psi_\omega)$, composed by a representation $\pi_\omega$ of $\mathfrak{A}(\mathcal{M})$ on the Hilbert space $\mathcal{H}_\omega$ and a vector $\ket{\psi_\omega}\in\mathcal{H}_\omega$, such that
\begin{equation}
    \omega(A)=\bra{\psi_\omega}\pi_\omega(A)\ket{\psi_\omega}~.
\end{equation}
Starting from this, one can evaluate the expectation values of observables over other vector states by
\begin{equation}
    \omega_{\psi'}(A)=\omega(B^*AB)=\bra{\psi'}\pi_\omega(A)\ket{\psi'}
    \label{e.vector folium}
\end{equation}
where $\ket{\psi'}=\pi_\omega(B)\ket{\psi_\omega}$ is a vector of the representation $\pi_\omega$. In particular, we may assume $\pi_\omega$ cyclic with $\ket{\psi_\omega}$ as the cyclic vector, meaning that any vector state of $\mathcal{H}_\omega$ can be approximated by $\pi_\omega(B)\ket{\psi_\omega}$, for some $B\in\mathfrak{A}(\mathcal{M})$. Moreover, for any given GNS representation of some state $\omega$, one can consider the states $\omega_\rho$ defined by
\begin{equation}
    \omega_\rho(A)=\Tr[\rho\pi_\omega(A)]~,
    \label{e.mixed folium}
\end{equation}
where $\rho$ is a positive trace class operator in the space $\mathcal{B}(\mathcal{H}_\omega)$ of bounded linear operators over $\mathcal{H}_\omega$. The set of all states that can be represented in this way is called the \textit{folium} of $\omega$, and the states contained in it are called \textit{normal states}. Taking a state (either a density operator or a vector state) of the folium of $\omega$ and making the GNS construction on it will change the Hilbert space, the representation of the selected state\footnote{In particular, the state will be a vector state in the new representation, regardless of whether it was a density operator or not in the old one.} and of the algebra's elements, but not the resulting expectation values.

\subsection{A quantum mechanical interlude}
\label{s.Bell_states_collapse}
Before proceeding, let us comment on how the algebraic toolbox describes the measurement-induced collapse of two non-relativistic spacelike systems in an entangled state. To this end, let us consider the Bell state $\ket{\beta_{00}}$ of two qubits $A$ and $B$ placed at some spacelike separated locations. In algebraic terms, this can be read as a system associated with some unital $*$-algebra of observables represented by a tensor product of Hilbert spaces $\mathcal{H}_{AB}=\mathbb{C}^2_A\otimes\mathbb{C}^2_B$, and a state
\begin{equation}
    \omega(AB)=\bra{\beta_{00}}\hat{A}\otimes\hat{B}\ket{\beta_{00}}
    \label{e.algebraic_Bell}
\end{equation}
on $\mathcal{A}$. Thanks to the correlations encoded in Bell states, the above state (maximally) violates the Bell inequality, given in Ref.~\cite{Baez87} for QM in the algebraic framework and in Ref.~\cite{SummersW87a} for AQFT. The state of one qubit, say $B$, changes non-locally when we measure the state of the other; in fact, the L\"uders rule postulates the non-local update
\begin{equation}
    \omega(\bullet)\mapsto\omega^{(m)}(\bullet)=\frac{\omega(M_m\bullet M^*_m)}{\omega(M_mM^*_m)}~,
    \label{e.algebraic_Bell_collapse}
\end{equation}
where $M_m$ are the elements of $\mathcal{A}_A$ associated with the operators defined on $\mathbb{C}_A$ describing a measurement $M$ with outcomes $\{m\}$~\cite{CliftonH01}. The degree of entanglement is generally reduced by the measurement, and the degree of violation of the Bell inequality is lessened accordingly.

In a causal diagram, the state \eqref{e.algebraic_Bell_collapse} is to be used both in $\mathcal{P}^+_M$ and $\mathcal{S}_M$, but not in the past of $M$. This is the picture seen by an observer who post-selects on the specific outcome obtained, and it can be used to describe single-shot experiments. On the contrary, an observer lying outside $\mathcal{J}^+(M)$, having ontic ignorance on the system's state and interpreting the algebraic state as their expectation value of some observable $B$ over many experiments, will use the average state
\begin{equation}
    \omega_{{\rm out}}(B)=\sum_m p_m \omega^{(m)}(B)~.
\end{equation}
Thanks to Einstein locality, the linearity of $\omega$, and that each repetition of $M$ will output the result $m$ with probability $p_m=\omega(M^*_mM_m)$, this state equals the pre-measurement state $\omega(B)$, hence enforcing the result of Sec.~\ref{s.first_case}.

In the following, we will use this example to prescribe an update rule for detector-based measurements in QFT that can address all previously presented issues, i.e. that accounts for quantum non-local correlations and does not demote the state to a contextual quantity.

\subsection{Updated states via AQFT}
\label{s.Equivalence_via_AQFT}
We are now ready to move from equivalence classes of density operators to algebraic states, and propose an update rule for the latter. To do so, we start from the GNS construction of the vacuum state $\omega_0$, i.e. $(\pi_0,\mathcal{H}_0,\ket{0})$, and notice that all operators in the above sections can be expressed as
\begin{equation}
    \hat{O}=\pi_0(O)
\end{equation}
for some local observable $O\in\mathfrak{A}(\mathcal{M})$. This is because, by working on the Hilbert space where the vacuum is represented by $\ket{0}$, we were implicitly assuming the GNS construction related to $\omega_0$. Then, the density operators $\rho$ identifying the classes $\varrho_R$ represent states in the folium of $\omega_0$, and the equivalence classes \eqref{e.equivalenceclass} are included in the equivalence classes
\begin{equation}
    \Omega_{R}=\{\omega\in\mathfrak{N}(R)~|~\omega(L)=\Tr[\mathcal{E}^\textbf{a}(\rho_0)\pi_0(L)]\}
\end{equation}
of (algebraic) states over $\mathfrak{A}(R)$, where $R$ is a causally convex region included in one of the regions identified by the measurements (e.g. $R\subset\mathcal{S}(M_1,\dots,M_n)$).
Before proceeding, it is essential to make two important observations. First, notice that the same local state in a given equivalence class can be extended to many inequivalent global states, meaning that it is only a tool for investigating the local properties of the various regions defined by the measurement. Second, a global state obtained by extending one of the local states obtained by the above procedure is generally not well-behaved on the boundary of the regions identified by the measurements; in particular, it was proven in Ref.~\cite{FewsterV13} (corollary 3.3 therein) that a normal state in a double cone\footnote{A double cone is defined as the set of all points lying on smooth
timelike curves anchored at two timelike events~\cite{FewsterR19}} $\mathcal{D}$ becomes non-normal when restricted to a double cone $\mathcal{D}'\subset\mathcal{D}$, and normal states locally defined on $\mathcal{D}'$ extend to states on $\mathcal{D}$ that diverge on the boundary of $\mathcal{D}'$. While the regions in our construction are not double cones, we still expect a diverging behaviour to appear on the measurement light cone which may be justified by a discontinuous transition induced by the L\"uders rule. Because of these issues, the above local states cannot be extended to global states. Yet, as this state reproduces the results of \cite{Polo-GomezEtAl21} without the need for a processing region, it can be seen as the contextual state of the MIO and a faithful tool to describe expectation values of observables localized in one region $R$. In particular, computing $n$-point functions within a given region $R$ by these states does not require identifying the related processing region, hence removing the epistemic ignorance from the state assignment.

As the state constructed above cannot be used to test correlations across measurement regions (or account for the results of single-shot experiments), we now ask whether we can find a state over $\mathcal{M}$ that correctly provides the expectation value of the product of two local observables $L_1$ and $L_2$ supported on regions $R$ and $R'$ not belonging to the same measurement-defined region. To this end, we consider a measurement $M$ with outcome $m$ and rewrite Eq.~\eqref{e.contextAnoB} as
\begin{equation}
    \rho^{M(m)}_{\mathcal{P}^+_M\cup\mathcal{S}_M}=\frac{\hat{K}_m\rho_{\phi_0}\hat{K}_m^\dagger}{\Tr[\hat{K}_m\rho_{\phi_0}\hat{K}_m^\dagger]}
    \label{e.pre_algebraic_assignment}
\end{equation}
where $\hat{K}_m=\bra{m}\hat{U}_A\ket{\alpha}$, and where we extended the validity of \eqref{e.contextAnoB} to $\mathcal{S}_M$ because of the entanglement properties of the vacuum \cite{SummersW87a, SummersW87b} and our above discussion about updates on entangled quantum states. The expectation values computed by the density operator \eqref{e.pre_algebraic_assignment} can be equivalently evaluated by the algebraic state
\begin{equation}
    \omega^{(m)}(\bullet)=\frac{\omega_0(K_m\bullet K_m^*)}{\omega_0(K_mK_m^*)}~,
    \label{e.QFT_update}
\end{equation}
where the elements $K_m$ of $\mathfrak{A}(M)$ are represented as $\hat{K}_m$ through the GNS construction of the vacuum. The above state is related to a local selective preparation $T:\mathfrak{A}(\mathcal{M})\rightarrow\mathfrak{A}(\mathcal{M})$ defined by $O\mapsto K_m O K_m^*$ \cite{BuchholzEtAl86, Werner87}. In practice, using a detector induces the definition of a set of Kraus operators on the local algebra of the field; these operators are localized in the measurement region $M$, and hence belong to $\mathfrak{A}(M)$. The update is obtained by considering the selective operation generated by applying a single Kraus operator from this set~\cite{CliftonH01}.

In practice, our proposal consists of using the detector-based measurement scheme of PGM to build, through its Kraus representation, a non-selective measurement protocol acting on the local algebra $\mathcal{A}(M)$ of the field. On the one hand, the fact that field states are inherently entangled means that local operations modify the state of the field outside the causal future of the measurement region, and hence affect the outcomes of single-shot experiments in all spacetime. On the other hand, averaging over many \textit{i.i.d.} experiments gives a localized non-selective operation, for which the results obtained by PGM in the exterior of $\mathcal{J}^+(M)$ still hold.

\subsection{The importance of knowing the outcomes}
As an example, let us use the algebraic framework to show that any naive extension of the states of $\Omega_{\mathcal{S}_M}$ to $\mathcal{P}^+_M$ gives the wrong global state. To demonstrate this, we consider the case 5 of Sec.~\ref{s.second_case}, where a detector $M=A$ measures the outcome $a$, known in the causal future region $\mathcal{P}^+_M$, with the appropriate equivalence class corresponding to the state $\rho_{\bf 5}$ given by \eqref{e.contextAnoB}. In turn, in the regions of $\mathcal{S}_M$, the appropriate state is a restriction of the vacuum $\rho_0=\ket{0}\bra{0}$.
We start by noticing that the corresponding algebraic states $\omega_0$ and $\omega_{\rho_{\bf 5}}$ are both normal states of the representation $\pi_0$, i.e. 
\begin{equation}
    \omega_0(L)=\bra{0}\hat{L}\ket{0}
    \label{e.omega_rho_0}
\end{equation}
with $\hat{L}=\pi_0(L)$, and $L\in\mathfrak{A}(\mathcal{S}_M)$, and
\begin{equation}
    \omega_{\rho_{\bf 5}}(L)=\bra{0}\hat{K}_a^\dagger\hat{L}\hat{K}_a\ket{0}
    \label{e.omega_rho_5}
\end{equation}
with $L\in\mathfrak{A}(\mathcal{P}^+_M)$; the operator \eqref{e.Kraus} can be obtained as the $\pi_0$-representation of a Kraus operator $K_a$ of the algebra $\mathfrak{A}(M)\subset \mathfrak{A}(\mathcal{M})$~\cite{Kraus83}. Extending Eq.~\eqref{e.omega_rho_0} to hold for elements in $\mathfrak{A}(\mathcal{P}^+_M)$, we can use Einstein's causality and the fact that $\pi_0$ is a group homomorphism to get that
\begin{equation}
    \omega_0(L)=\omega_{\rho_{\bf 5}}(L)~,~\forall L\in\mathfrak{A}(\mathcal{S}_M)
\end{equation}
holds if and only if
\begin{equation}
    \pi_0(K_a^*K_a)=\mathbb{I}_{\mathcal{H}_0}~,
\end{equation}
which means
\begin{equation}
    K_a^*K_a=\mathbb{I}_{\mathfrak{A}(M)}~.
    \label{e.Q_condition}
\end{equation}
However, Eq.~\eqref{e.Q_condition} does not hold for the Kraus operators defined in Eq.~\eqref{e.Kraus} and, more generally, for any $K_a\in\mathfrak{A}(\mathcal{M})$ associated with an informative measurement~\cite{Ozawa12, OkamuraO15}. Hence the incorrect attempt to use $\omega_0(L)$ in $\mathcal{P}^+_M$ gives wrong expectation values.

This result is the algebraic version of the one obtained in Sec.~VIII of Ref.~\cite{Polo-GomezEtAl21} and in Sec.~\ref{s.state_assignment} of this manuscript, where we proved that different (contextual) density operators do not always give the same expectation values.

\subsection{Addressing the second issue}
\label{s.second_solution}
We now show how our algebraic state assignment consistently describes one- and two-point functions, solving the issue presented in Sec.~\ref{s.issue2}. In particular, for a given measurement $M$ we focus our discussion about two-point functions on field operators evaluated at any one and two points in $\mathcal{P}^+_M$ and $\mathcal{S}_M=\mathcal{M}\setminus\mathcal{J}(M)$. The construction can be generalised to more measurements and higher order $n$-point functions evaluated in any collection of points belonging to regions having definite causal relation with all measurements.

For the rest of this section, we will consider point-wise $n$-point functions, i.e. Wightman functions evaluated on spacetime points. However, these must be read via the nuclear theorem as a tool for calculating the smeared Wightman functions
\begin{equation}
    W(f_1,\dots,f_n)=\int \prod_{j=1}^ndx_j f_j(x_j)\langle \phi(x_1)\dots\phi(x_n)\rangle~,
    \label{e.smeared_W}
\end{equation}
where $f_i$ are Schwartz test functions localised in compact regions with definite causal relation with all measurements~\cite{StreaterW00}.

\subsubsection{one-point functions}
First, suppose $M$ gives the outcome $m$ and study the related one-point functions. By our prescription, these are
\begin{equation}
    \braket{\hat{\phi}(x)}=\omega^{(m)}(\phi(x))=\frac{\omega_0(M_m\phi(x)M^*_m)}{\omega_0(M_m M^*_m)}
\end{equation}
if $x\in\mathcal{P}^+_M$ or $\mathcal{S}_M$, and 
\begin{equation}
    \braket{\hat{\phi}(x)}=\omega_0(\phi(x))
\end{equation}
if $x\in\mathcal{P}^-_M$. Yet, if we focus on the expectation value of observables localized in $R\subset\mathcal{S}_M$ over many (non-selected) \textit{i.i.d.} measurements in $M$, then we get
\begin{equation}
    \braket{\hat{\phi}(x)}=\sum_m p_m\omega^{(m)}(\phi(x))=\omega_0(\phi(x))
\end{equation}
as in Sec.~\ref{s.Bell_states_collapse}. Hence, in the non-selective scenario, the rule proposed in Ref.~\cite{Polo-GomezEtAl21} for one-point functions can be applied in our framework without modification.

\subsubsection{two-point functions}
Next, we consider two-point functions and, following the same logic as above, we get
\begin{equation}
    \braket{\hat{\phi}(x)\hat{\phi}(y)}=
    \omega^{(m)}(\phi(x)\phi(y))
\end{equation}
for any choice of $x$ and $y$ in $R\subset\mathcal{P}^+_M$ or $R\subset\mathcal{S}_M$. Again, considering the average over many samples of two-point functions supported in a region $R$ outside $\mathcal{P}^+_M$ gives a non-selective rule equivalent to that of PGM. Notice that, while our assignment is identical to that of Ref.~\cite{Polo-GomezEtAl21} in the region $\mathcal{P}^+_M$, it essentially differs from it for points in $R\subset\mathcal{S}_M$, as there PGM always assign a non-selective update. Moreover, here states are non-contextual quantities, and the resulting $2$-point functions do not depend on the region where they are calculated in (i.e. the location of the processing region), hence addressing the issue raised by the discussion of Sec.~\ref{s.issue2} and exemplified by Fig.s~\ref{f.issue2b} and \ref{f.issue2c}.

Let us now discuss some consequences of this prescription. As a first case, we consider $x,y\in R\subset\mathcal{P}^+_M$. This reads
\begin{equation}
    \braket{\hat{\phi}(x)\hat{\phi}(y)}=\omega_0(K_m^*\phi(x)\phi(y)K_m)\neq \omega_0(\phi(x)\phi(y))~,
    \label{e.two-points_w_kraus}
\end{equation}
in accordance with the prescription of PGM. The same holds if $x\in\mathcal{P}^+_M$ and $y\in\mathcal{S}_M$ (and viceversa). This result is also identical to PGM's, but for different reasons. In our prescription, this comes from the ontic nature of the state across all spacetime, hence making the correlation a property of the field. Finally, Eq.~\eqref{e.two-points_w_kraus} also holds if $x,y\in R\subset\mathcal{S}_M$; this contrasts with PGM's prescription, but solves all the pathological cases presented in Sec.~\ref{s.issue2}. Moreover, the result of PGM is recovered when thinking of the two-point function measured over many \textit{i.i.d.} experiments, where no post-selection procedure is applied. In this case, 
\begin{equation}
\begin{split}
    \braket{\hat{\phi}(x)\hat{\phi}(y)}_{NS}=\omega_0(\phi(x)\phi(y))~,
    \label{e.non_selective_two_points}
\end{split}
\end{equation}
as it is easy to show by linearity and Einstein causality. Notice that, even if the above expectation value resembles that of the vacuum outside $\mathcal{J}(M)$, the global state is non-trivial to construct and interpret by standard textbook QFT tools.

Before concluding, let us notice that Eq.~\eqref{e.non_selective_two_points} is obtained by restricting the selective states to some regions $R$ outside $\mathcal{J}(M)$ and average overall different outcomes. This procedure is possible because, assuming the spacetime dimensions to be $d>2$, it is always possible to find an open (and causally convex) set $R\subset \mathcal{S}_M$ covering any two points outside $\mathcal{J}(M)$ over which a state $\omega$ can be defined (the proof of this fact can be found in App.~\ref{a.open_causally_convex_set}). Yet, it is interesting to understand why this is not true in $d=2$ and the resulting implications. In this case, $\mathcal{J}(M)$ cuts the spacetime into two disconnected pieces, hence not always allowing a definition of an open set enclosing all points over which the two-point function is evaluated. For example, this is impossible when the points are selected as in Fig.~\ref{f.issue2b}. Discussing the $d=2$ case is not trivial, and we leave in-depth dissemination for future work; here, we just remark that Eq.~\eqref{e.non_selective_two_points} does not yield the usual vacuum expectation value of the two-point function in this scenario and that density operators prove to be inadequate for addressing this situation, highlighting an advantage of the algebraic framework over the standard approach.

\subsubsection{\texorpdfstring{Measuring $n$-point functions}{Measuring n-point functions}}

Before concluding, let us briefly comment on how we can use an operational procedure to measure the above correlation functions. First, we distinguish between the standard notion of a correlation function as a mathematical object as defined and computed in introductory QFT textbooks and a ``measured'' correlation function. By the latter, we mean measuring $\bra{\psi} \phi(y) \phi(x) \ket{\psi}$ directly via some properly designed field-detector coupling and a POVM defined on the latter, which serves to directly measure the field values $\phi (y),\phi (x)$ when the field is prepared in the state $\ket{\psi}$ and the expectation value from the measurement data. The details of which interaction and POVM are the best for performing such measurements are beyond the scope of this paper; we refer to Ref.~\cite{FewsterV18} for an example of such a realization in the case of field-like probes. When discussing measured quantities (specifically, expectation values), we need a way to perform many independent and identically distributed (\textit{i.i.d.}) experiments upon the system of interest. This can be achieved through many copies of the measured field or by measuring the field by many detectors at many sets of spacelike separated points. While the first option can be considered unfeasible on practical grounds, the second is available only if the field and the initial state we want to measure are Poincar\'e invariant~\cite{BostelmannEtAl21, HellwigK70}. Assuming the latter condition, one should design proper measurement configurations such that the outcomes obtained from the collection of detectors give reliable and valuable information about the selected field observable (in our case, $n$-point functions). Yet, it is crucial to notice that, even by this procedure, the measurements performed on the probes change the state of the field, meaning that the obtained expectation value gives valuable information about the past state, which cannot be used for future predictions.

In particular, this causes problems when trying to measure $n$-point functions at timelike separated points. Suppose an observer wants to test the field in two points $x$ and $y$, with $y\in\mathcal{J}^+(x)$. Then, when getting information about $\phi(x)$ via a measurement on the probe, the observer collapses the state of the field in the future of $x$, destroying information and making the original unmeasured field state at $y$ inaccessible. Therefore, our discussion cannot apply to any $n$-point function for which any two points are timelike to each other. Nonetheless, it should be noted that measuring timelike correlations is still possible; this can be done by employing either several spacelike separated detectors or many replicated spacetimes (or local experimental setups), hence accessing a collection of \textit{i.i.d.} experiments from which one can measure the associate probabilities and thus reconstruct the required $n$-point function describing the field state in the absence of measurements (a discussion on why replicas are needed when repeatedly measuring the same system can be found in Ref.~\cite{PranziniEtAl23}). Indeed, once the required data is gathered by the above procedure, one can use, for example, the prescription of Ref.~\cite{deRamonEtAl18} and obtain any correlation function, including those at timelike separated points. 

\section{Conclusions}
\label{s.conclusions}
In this work, we argued against the contextuality (read, subjectivity) of QFT states and proposed interpreting states as non-contextual spacetime-dependent objects. The two interpretations can be reconciled by considering the state described by a maximally informed observer. However, the first cannot account for non-local measurement-induced state updates, for which a non-contextual state is required. Starting from PGM's observation that density operators are contextual states with a hidden spacetime dependence, we argued for a shift in the description of the detector-based measurement scheme in which density operators are replaced by more general algebraic states. In particular, we defined our spacetime-dependent states by assigning an equivalence class of density operators to each spacetime region having a definite causal relation with all measurements performed in spacetime, and then distilled an algebraic state from the collection of these classes. The need for using algebraic states came from the fact that, while clarifying the spacetime dependence, equivalence classes are insufficient to consistently evaluate the expectation values of non-local observables (e.g. $n$-point functions). Our construction enables the definition of Kraus operators acting on the local algebra of the field, and hence to evaluate expectation values of spacelike non-local observables across measurements' lightcones. Finally, we employed this construction to calculate two-point functions over those choices of points across measurement lightcones that were pathological when using the original notion of contextual states.

On the one hand, PGM's framework~\cite{Polo-GomezEtAl21} shows potential for widespread adoption due to its formal simplicity, close connection to conventional techniques of non-relativistic quantum mechanics and QFT, and a clear identification of measurement apparatuses with non-relativistic probes such as Unruh-DeWitt detectors. On the other hand, we showed the algebraic toolbox is well suited to describe measurements and the related non-local state update, allowing a direct way to compute the expectation value of non-local observables. However, using algebraic tools (e.g. in the form of the FV proposal~\cite{FewsterV23}) presents some challenges because 1) it requires a more sophisticated level of familiarity with AQFT and 2) it uses an auxiliary quantum field to probe the field of interest. For the auxiliary field, an implementation strategy in terms of measurable systems is less transparent due to the challenges of designing an experimental realization of the required measurement scheme. Through our work, we have attempted to construct a sufficiently simple framework that amalgamates key features from the proposals of PGM and FV, hence solving the issues of the former by the tools of the latter.

\section*{Acknowledgments}
The authors gratefully thank Jos\'e Polo-G\'omez and Eduardo Mart\'in-Mart\'inez for discussions and constructive criticism. N.P. also thanks Claudio Dappiaggi, Paolo Meda, and Paolo Perinotti for useful remarks and discussions. Their help greatly improved the quality of our work, and any inaccuracies contained herein are solely the authors' responsibility. The authors thank the Wallenberg Initiative on Networks and Quantum Information (WINQ) for hospitality at the workshop ``Informational Foundations of QFT'', where a part of this work was completed, and acknowledge the financial support of the Research Council of Finland through the Finnish Quantum Flagship project (358878, UH). N.P. also expresses gratitude for financial support from the Magnus Ehrnrooth Foundation and the Academy of Finland via the Centre of Excellence program (Project No. 336810 and Project No. 336814).

\bibliography{bib}

\newpage
\onecolumngrid
\appendix
\section{\texorpdfstring{Proof of Eq.~\eqref{e.FactoringMeansIdentity}}{Proof of Eq. (39)}}
\label{a.ProofFactoringMeansIdentity}
\noindent In order to prove Eq.~\eqref{e.FactoringMeansIdentity}, let us assign the detector a Hilbert space $\mathcal{H}_D$ of dimension $d$ (in the main text, this was chosen to be $2$). Hence, the state after the (entangling) interaction can be expanded as
\begin{equation}
    \ket{\psi}=\sum_{i=1}^d\gamma_i\ket{i}\ket{\phi_i}
\end{equation}
for some set of QFT states $\{\ket{\phi_i}\}$ such that $\braket{\phi_i|\phi_j}\neq 1$, where at least two of the $\gamma_i$ coefficient are non-zero, and they satisfy
\begin{equation}
    \sum_{i=1}^d|\gamma_i|^2=1~.
\end{equation}
Then, we can expand the LHS of Eq.~\eqref{e.FactoringMeansIdentity} as
\begin{equation}
    \Tr[\rho' \hat{L}\hat{\Gamma}]=\bra{\psi}\hat{L}\hat{\Gamma}\ket{\psi}=\sum_{ij}\gamma_i^*\gamma_j\bra{\phi_i}\hat{L}\ket{\phi_j}\bra{i}\hat{\Gamma}\ket{j}~,
    \label{e.app1.1}
\end{equation}
and, in the same way, obtain the RHS as
\begin{equation}
    \Tr[\rho' \hat{L}]\Tr[\rho' \hat{\Gamma}]=\sum_i|\gamma_i|^2\bra{\phi_i}\hat{L}\ket{\phi_i}\times\sum_{ij}\gamma_i^*\gamma_j\braket{\phi_i|\phi_j}\bra{i}\hat{\Gamma}\ket{j}~.
    \label{e.app1.2}
    \end{equation}
Since Eq.s~\eqref{e.app1.1} and \eqref{e.app1.2} must be equal for all $\hat{L}$, we get that
\begin{equation}
    \begin{dcases}
    \bra{i}\hat{\Gamma}\ket{i}=\sum_{ij}\gamma_i^*\gamma_j\braket{\phi_i|\phi_j}\bra{i}\hat{\Gamma}\ket{j}\\
    \gamma^*_i\gamma_j\bra{i}\hat{\Gamma}\ket{j}=0
    \end{dcases}
\end{equation}
By the last $d(d-1)/2$ equations, we get that $\hat{\Gamma}$ must be diagonal in the $\{\ket{i}\}$ basis; by using this information in the first $d$ equations we get
\begin{equation}
    \bra{i}\hat{\Gamma}\ket{i} =\sum_j |\gamma_j|^2\bra{j}\hat{\Gamma}\ket{j}~;
\end{equation}
for $\hat{\Gamma}$ to be independent of the state components, we must have
\begin{equation}
    \bra{i}\hat{\Gamma}\ket{i}=\bra{j}\hat{\Gamma}\ket{j}~~\forall i,j~,
\end{equation}
meaning that
\begin{equation}
    \hat{\Gamma}\propto \mathbb{I}~.
\end{equation}

\section{\texorpdfstring{Proof of Eq.~\eqref{e.9isUnique}}{Proof of Eq. (42)}}
\label{a.proof5689}
\noindent The proof of Eq.~\eqref{e.9isUnique} is split into two parts. First, we prove that
\begin{equation}
    \langle\hat{L}\rangle_{\bf 9}\neq \langle\hat{L}\rangle_{\bf 1,\dots,4}~;
    \label{b.first part}
\end{equation}
second, we prove that 
\begin{equation}
    \langle\hat{L}\rangle_{\bf 9}\neq \langle\hat{L}\rangle_{\bf 5,\dots,8}~.
    \label{b.second part}
\end{equation}
\subsection{\texorpdfstring{Proof of Eq.~\eqref{b.first part}}{Proof of Eq. (B1)}}
\noindent In order to prove Eq.~\eqref{b.first part}, we consider the expectation values of any local $\hat{L}$ having support on $\mathcal{P}^+_A\cap\mathcal{P}^+_B$ over $\rho_{\bf 4}$, i.e.
\begin{equation}
    \langle\hat{L}\rangle_{\bf 4}=\Tr[\hat{U}_A\hat{U}_B(\rho_{\phi_0}\otimes\rho_{\alpha}\otimes\sigma_\beta)\hat{U}^\dagger_B\hat{U}_A^\dagger\hat{L}]~.
\end{equation}
By defining $\rho'=\hat{U}_A\hat{U}_B(\rho_{\phi_0}\otimes\rho_{\alpha}\otimes\sigma_\beta)\hat{U}^\dagger_B\hat{U}_A^\dagger$, and $\hat{\Gamma}_a=\hat{M}^\dagger_a\hat{M}_a$ and $\hat{\Gamma}_b=\hat{M}^\dagger_b\hat{M}_b$, we have that
\begin{equation}
    \langle\hat{L}\rangle_{\bf 4}=\Tr[\rho'\hat{L}]
\end{equation}
and
\begin{equation}
    \langle\hat{L}\rangle_{\bf 9}=\frac{\Tr[\rho'\hat{\Gamma}_a\hat{\Gamma}_b\hat{L}]}{\Tr[\rho'\hat{\Gamma}_a\hat{\Gamma}_b]}~.
\end{equation}
Therefore, proceeding \textit{ad absurdum} and negating Eq.~\eqref{b.first part} gives
\begin{equation}
    \Tr[\rho'\hat{\Gamma}_a\hat{\Gamma}_b\hat{L}]=\Tr[\rho'\hat{\Gamma}_a\hat{\Gamma}_b]\Tr[\rho'\hat{L}]
    \label{b.absurdum1}
\end{equation}
which, by applying the same procedure as in App.~\ref{a.ProofFactoringMeansIdentity}, gives
\begin{equation}
    \hat{\Gamma}_a\hat{\Gamma}_b\propto\mathbb{I}~\Rightarrow~\hat{\Gamma}_{a}\propto\mathbb{I}_{A}~{\rm and}~\hat{\Gamma}_{b}\propto\mathbb{I}_{B}~.
\end{equation}
However, when building the measurements we selected the measurement operator to satisfy $\hat{M}^\dagger_a\hat{M}_a\neq k\mathbb{I}_A$ (and similarly for $\hat{M}_b$), where $k$ is some proportionality constant so that they described a non-trivial measurement. The contradiction means that Eq.~\eqref{b.absurdum1} is false for non-trivial measurements.

\subsection{\texorpdfstring{Proof of Eq.~\eqref{b.second part}}{Proof of Eq. (B2)}}
\noindent To prove Eq.~\eqref{b.second part}, we consider the expectation values of any local $\hat{L}$ having support on $\mathcal{P}^+_A\cap\mathcal{P}^+_B$ over $\rho_{\bf 6}$ (similarly, one can perform the proof for $\rho_{\bf 8}$). By the same notation as above, we have that
\begin{equation}
    \langle\hat{L}\rangle_{\bf 6}=\frac{\Tr[\rho'\hat{\Gamma}_a\hat{L}]}{\Tr[\rho'\hat{\Gamma}_a]}~.
\end{equation}
Therefore, by proceeding again \textit{ad absurdum} and negating
\begin{equation}
    \langle\hat{L}\rangle_{\bf 9}\neq \langle\hat{L}\rangle_{\bf 6}
\end{equation}
means that
\begin{equation}
    \Tr[\rho'\hat{\Gamma}_a\hat{\Gamma}_b\hat{L}]\Tr[\rho'\hat{\Gamma}_a]=\Tr[\rho'\hat{\Gamma}_a\hat{\Gamma}_b]\Tr[\rho'\hat{\Gamma}_a\hat{L}]~.
    \label{b.absurdum2}
\end{equation}
Showing that this implies that the measurement operators are inconsistent with a good measurement choice is harder. To do so, let us assume without loss of generality that $dim(\mathcal{H}_A)\leq dim(\mathcal{H}_B)$, and write the states appearing in $\rho'=\ket{\psi}\bra{\psi}$ as
\begin{equation}
    \ket{\psi}=\sum_i\gamma_i\ket{i}\ket{\beta_i}\ket{\phi_i}~.
\end{equation}
Using this expression, it is possible to rewrite the LHS of Eq.~\eqref{b.absurdum2} as
\begin{equation}
    \sum_{ij}\gamma^*_i\gamma_j\bra{i}\hat{\Gamma}_a\ket{j}\braket{\beta_i|\beta_j}\braket{\phi_i|\phi_j} \times \sum_{ij}\gamma^*_i\gamma_j\bra{i}\hat{\Gamma}_a\ket{j}\bra{\beta_i}\hat{\Gamma}_b\ket{\beta_j}\bra{\phi_i}\hat{L}\ket{\phi_j}
\end{equation}
and its RHS as
\begin{equation}
    \sum_{ij}\gamma^*_i\gamma_j\bra{i}\hat{\Gamma}_a\ket{j}\bra{\beta_i}\hat{\Gamma}_b\ket{\beta_j}\braket{\phi_i|\phi_j} \times \sum_{ij}\gamma^*_i\gamma_j\bra{i}\hat{\Gamma}_a\ket{j}\braket{\beta_i|\beta_j}\bra{\phi_i}\hat{L}\ket{\phi_j}~;
\end{equation}
matching these for all $\hat{L}$ means that
\begin{equation}
    \bra{\beta_m}\hat{\Gamma}_b\ket{\beta_n}\sum_{ij}\gamma^*_i\gamma_j\bra{i}\hat{\Gamma}_a\ket{j}\braket{\beta_i|\beta_j}\braket{\phi_i|\phi_j} =\braket{\beta_m|\beta_n}\sum_{ij}\gamma^*_i\gamma_j\bra{i}\hat{\Gamma}_a\ket{j}\bra{\beta_i}\hat{\Gamma}_b\ket{\beta_j}\braket{\phi_i|\phi_j}
\end{equation}
for all $m,n$. Moreover, since the properties of $\hat{\Gamma}_b$ cannot depend on those of $\hat{\Gamma}_a$, we can select the latter to be any effect of our choice. In particular, by choosing it to be a projector over $\ket{k}$, we get
\begin{equation}
    \bra{\beta_m}\hat{\Gamma}_b\ket{\beta_n}=\braket{\beta_m|\beta_n}\bra{\beta_k}\hat{\Gamma}_b\ket{\beta_k}~.
\end{equation}
By choosing $m=n$ we obtain that
\begin{equation}
    \bra{\beta_m}\hat{\Gamma}_b\ket{\beta_m}=\bra{\beta_k}\hat{\Gamma}_b\ket{\beta_k}
\end{equation}
for all $k$ and $m$. By calling $\Gamma$ these elements we get
\begin{equation}
    \bra{\beta_m}\hat{\Gamma}_b\ket{\beta_n}=\braket{\beta_m|\beta_n}\Gamma~.
\end{equation}
which, since it has to be true for all $m$ and $n$ and for all possible entangling interactions, means that
\begin{equation}
    \hat{\Gamma}_b\propto\mathbb{I}~.
\end{equation}
As mentioned above, we reach a contradiction and hence our proof is complete.

\section{The causal structure around measurement regions}

\label{a.open_causally_convex_set}
In this section, we prove that a causally convex open region enclosing two points that lie out a given measurement region and its causal past/future always exists in $d>2$. To obtain this result, we first need to prove some lemmata.
\begin{lemma}
    Consider two points $p,q\in\mathcal{M}$ such that $p\notin \mathcal{J}(q)$. Then there exists $\mathcal{O}_p$ open neighbourood of $p$ such that  $\mathcal{D}(\mathcal{O}_p)\cap\mathcal{J}(p)=\emptyset$.
\end{lemma}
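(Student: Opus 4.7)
The plan is to use two standard ingredients: first, in a causally regular spacetime $\mathcal{J}(q)$ is closed; second, a sufficiently small open double cone is its own domain of dependence. I interpret the statement as concluding $\mathcal{D}(\mathcal{O}_p)\cap\mathcal{J}(q)=\emptyset$, since $p\in\mathcal{O}_p\subset\mathcal{D}(\mathcal{O}_p)$ combined with $p\in\mathcal{J}(p)$ would make $\mathcal{D}(\mathcal{O}_p)\cap\mathcal{J}(p)=\emptyset$ vacuously false; the role of the hypothesis $p\notin\mathcal{J}(q)$ also makes clear that $q$ is the intended argument on the right-hand side.

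Since $p\notin\mathcal{J}(q)$ and $\mathcal{J}(q)$ is closed (as is standard in causally simple or globally hyperbolic spacetimes), the complement $\mathcal{M}\setminus\mathcal{J}(q)$ is an open neighbourhood of $p$. I would shrink it, if necessary, to a geodesically convex normal neighbourhood $\mathcal{V}$ of $p$ contained in $\mathcal{M}\setminus\mathcal{J}(q)$; such a $\mathcal{V}$ always exists in a smooth time-orientable Lorentzian manifold. Inside $\mathcal{V}$, I would then pick points $a$ chronologically preceding $p$ and $b$ chronologically following $p$, close enough to $p$ that the open double cone
\begin{equation}
    \mathcal{O}_p \;:=\; \{\,r\in\mathcal{M}\ :\ a\in\mathcal{J}^-(r),\ b\in\mathcal{J}^+(r)\,\}^{\circ}
\end{equation}
and its closure $\overline{\mathcal{O}_p}$ both lie entirely within $\mathcal{V}$.

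The central step is to verify that $\mathcal{D}(\mathcal{O}_p)=\mathcal{O}_p$. For any $z\notin\mathcal{O}_p$, one constructs a past- or future-inextendible causal curve through $z$ that misses the compact region $\overline{\mathcal{O}_p}$ by deflecting laterally; the local Minkowskian structure of $\mathcal{V}$ makes this possible, since the availability of at least one spatial dimension ($d-1\geq 1$, which is certainly implied by the paper's blanket assumption $d>2$) suffices to route the curve around $\overline{\mathcal{O}_p}$. Hence $z\notin\mathcal{D}(\mathcal{O}_p)$, and the trivial inclusion $\mathcal{O}_p\subset\mathcal{D}(\mathcal{O}_p)$ then yields $\mathcal{D}(\mathcal{O}_p)=\mathcal{O}_p\subset\mathcal{V}\subset\mathcal{M}\setminus\mathcal{J}(q)$, which gives the required disjointness.

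The main obstacle will be the rigorous construction of the escaping inextendible causal curves through points $z$ close to the null boundaries of the double cone. In Minkowski space the construction is elementary because of the freedom to prescribe the spatial direction of a timelike geodesic; in the curved setting, one must invoke the standard theory of convex normal neighbourhoods to transport the argument, controlling how the lateral deflection behaves near the light-like faces of $\overline{\mathcal{O}_p}$ and how the inextendible curves leave $\mathcal{V}$ without ever re-entering $\mathcal{O}_p$.
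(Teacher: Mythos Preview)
Your identification of the typo (the conclusion should read $\mathcal{D}(\mathcal{O}_p)\cap\mathcal{J}(q)=\emptyset$) is correct and matches what the paper actually proves.

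Your argument is sound, but it takes a considerably heavier route than the paper's. You invest effort in choosing $\mathcal{O}_p$ to be a small open double cone and then proving $\mathcal{D}(\mathcal{O}_p)=\mathcal{O}_p$ via escaping inextendible curves; you flag this construction as the main obstacle and note the curved-spacetime subtleties. The paper bypasses all of this. It takes \emph{any} open neighbourhood $\mathcal{O}_p$ with $\mathcal{O}_p\cap\mathcal{J}(q)=\emptyset$ and argues by contradiction: if some $x\in\mathcal{D}(\mathcal{O}_p)\cap\mathcal{J}(q)$ existed, one extends the causal curve joining $q$ and $x$ to an inextendible causal curve through $x$; by the definition of $\mathcal{D}(\mathcal{O}_p)$ this curve must meet $\mathcal{O}_p$, and the intersection point, lying on a causal curve through $q$, belongs to $\mathcal{J}(q)$ as well --- contradicting $\mathcal{O}_p\cap\mathcal{J}(q)=\emptyset$.

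The trade-off: the paper's argument is shorter, needs no special shape for $\mathcal{O}_p$, and avoids the convex-normal-neighbourhood machinery and the delicate boundary analysis you anticipate. Your approach, on the other hand, establishes the stronger intermediate fact $\mathcal{D}(\mathcal{O}_p)=\mathcal{O}_p$, which is of independent interest but not needed here. The conceptual point you are missing is that the very causal curve certifying $x\in\mathcal{J}(q)$ can be fed into the defining property of $\mathcal{D}(\mathcal{O}_p)$, so there is no need to control $\mathcal{D}(\mathcal{O}_p)$ geometrically at all.
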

\begin{proof}
    First we notice that there always exists $\mathcal{O}_p$ open neighbourood of $p$ such that $\mathcal{O}_p\cap \mathcal{J}(q)=\emptyset$. Then, we proceed ad absurdum and suppose $\mathcal{D}(\mathcal{O}_p)\cap\mathcal{J}(q)\neq\emptyset$. Then $\exists x\in \mathcal{J}(q)$ such that $x\in\mathcal{D}(\mathcal{O}_p)$, which means that either $x\in\mathcal{O}_p$ or $x\in\mathcal{D}(\mathcal{O}_p)\setminus\mathcal{O}_p$. In the first case, $x\in\mathcal{O}_p\cap \mathcal{J}(q)\neq\emptyset$; in the second case, $x\in\mathcal{D}(\mathcal{O}_p)\setminus\mathcal{O}_p$ means $\exists \gamma:\mathbb{R}\rightarrow\mathcal{M}$ timelike curve passing through $x$ such that $\gamma(0)\in\mathcal{O}_p$ and $\gamma(1)=q$, which implies $\gamma(0)\in\mathcal{J}(q)$ and hence $\gamma(0)\in\mathcal{O}_p\cap \mathcal{J}(q)\neq\emptyset$. Both conclusions are contrary to the hypothesis.
\end{proof}
\begin{lemma}
    For any $M\subset\mathcal{M}$ and curve $\gamma:[0,1]\rightarrow\mathcal{M}$ such that $\gamma(\sigma)\notin \mathcal{J}(M),$ $\forall\sigma\in[0,1]$, then 
    \begin{equation}
        \left(\cup_{\sigma\in[0,1]}\mathcal{D}(\mathcal{O}_{\gamma(\sigma)})\right)\cap\mathcal{J}(M)=\emptyset~,
    \end{equation}
    where $\mathcal{O}_{\gamma(\sigma)}$ is some open neighbourhood of $\gamma(\sigma)$. Moreover, $\mathcal{D}_\gamma\equiv\cup_{\sigma\in[0,1]}\mathcal{D}(\mathcal{O}_{\gamma(\sigma)})$ is causally convex.
\end{lemma}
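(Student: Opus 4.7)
My plan is to reduce the problem to Lemma~1, thicken the resulting diamonds into a tubular union, and then separately grapple with causal convexity.

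First, I would upgrade Lemma~1 so that the ``target'' point $q$ is replaced by the set $M$. Since $\gamma(\sigma)\notin\mathcal{J}(M)$ for every $\sigma$, the complement $\mathcal{M}\setminus\mathcal{J}(M)$ (open under the standard causal regularity already implicit in the paper) contains an open neighbourhood $\mathcal{O}_{\gamma(\sigma)}$ of $\gamma(\sigma)$ disjoint from $\mathcal{J}(M)$. I would then replay the \emph{reductio} of Lemma~1 verbatim, with an arbitrary $q\in M$ in place of the fixed $q$: if some $x\in\mathcal{D}(\mathcal{O}_{\gamma(\sigma)})\cap\mathcal{J}(M)$ existed, then $x\in\mathcal{J}(q)$ for some $q\in M$, and either $x\in\mathcal{O}_{\gamma(\sigma)}$ (contradicting $\mathcal{O}_{\gamma(\sigma)}\cap\mathcal{J}(M)=\emptyset$), or a non-spacelike curve from $\mathcal{O}_{\gamma(\sigma)}$ through $x$ to $q$ would have its starting point inside $\mathcal{O}_{\gamma(\sigma)}\cap\mathcal{J}(q)\subset\mathcal{O}_{\gamma(\sigma)}\cap\mathcal{J}(M)=\emptyset$.

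The union statement is then immediate: if $x\in\mathcal{D}_\gamma\cap\mathcal{J}(M)$, then $x\in\mathcal{D}(\mathcal{O}_{\gamma(\sigma)})$ for at least one $\sigma$, contradicting what we just established. In symbols,
\begin{equation}
\mathcal{D}_\gamma\cap\mathcal{J}(M)=\bigcup_{\sigma\in[0,1]}\bigl(\mathcal{D}(\mathcal{O}_{\gamma(\sigma)})\cap\mathcal{J}(M)\bigr)=\emptyset.
\end{equation}

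For causal convexity, the strategy is more delicate. Each $\mathcal{D}(\mathcal{O}_{\gamma(\sigma)})$ is already causally convex, since for two of its points, any inextendible non-spacelike curve through an interior point of a timelike segment joining them can be extended in both directions and each extension must meet the open set $\mathcal{O}_{\gamma(\sigma)}$. To pass to the union, I would exploit the compactness of $\gamma([0,1])$ and the continuity of $\gamma$: choose each $\mathcal{O}_{\gamma(\sigma)}$ as a geodesically convex ball small enough that neighbouring balls overlap, so that $\bigcup_\sigma\mathcal{O}_{\gamma(\sigma)}$ is a connected open tube about $\gamma$. Given $a,b\in\mathcal{D}_\gamma$ linked by a timelike segment, I would cover the segment by a finite chain of overlapping diamonds $\mathcal{D}(\mathcal{O}_{\gamma(\sigma_i)})$ and invoke the causal convexity of each link to conclude that every interior point of the segment lies in some $\mathcal{D}(\mathcal{O}_{\gamma(\sigma_i)})\subset\mathcal{D}_\gamma$.

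The main obstacle I anticipate is exactly that last step: ruling out that a timelike segment ``leaks'' between adjacent diamonds. A clean route around this would be to argue that in fact $\mathcal{D}_\gamma=\mathcal{D}\bigl(\bigcup_\sigma\mathcal{O}_{\gamma(\sigma)}\bigr)$, i.e.\ the domain of dependence of the whole tube, which is automatically causally convex by the argument above. The nontrivial inclusion $\mathcal{D}\bigl(\bigcup_\sigma\mathcal{O}_{\gamma(\sigma)}\bigr)\subseteq\mathcal{D}_\gamma$ would need that any inextendible causal curve meeting the tube meets a single $\mathcal{O}_{\gamma(\sigma)}$, and it is plausibly here that the $d>2$ hypothesis used in Sec.~\ref{s.second_solution} enters, by preventing $\mathcal{J}(M)$ from disconnecting the tube and forcing the intersection of a causal curve with the tube to be a single connected arc.
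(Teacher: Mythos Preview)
Your disjointness argument is precisely the paper's: apply Lemma~1 at each $\gamma(\sigma)$ and take the union. The paper compresses this to the single sentence ``We just use lemma~1 and the topological definition of open set,'' so your first two paragraphs are a faithful fleshing-out of exactly that.

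For causal convexity the paper's entire proof is the clause ``$\mathcal{D}_\gamma$ is causally convex because it is a union of causally convex sets.'' You are right to treat this as nontrivial: unions of causally convex sets are \emph{not} causally convex in general (two timelike-separated double cones already fail), so the paper offers no real argument here and your caution is warranted rather than excessive. That said, your proposed repairs also have gaps. The chain-of-diamonds idea suffers exactly the ``leaking'' problem you flag. And the equality $\mathcal{D}_\gamma=\mathcal{D}\bigl(\bigcup_\sigma\mathcal{O}_{\gamma(\sigma)}\bigr)$ fails in general in the direction you call nontrivial: already for two overlapping balls $O_1,O_2$ on a spacelike slice one has $\mathcal{D}(O_1\cup O_2)\supsetneq\mathcal{D}(O_1)\cup\mathcal{D}(O_2)$, because a point of $\mathcal{D}(O_1\cup O_2)$ need not have all its past-inextendible causal curves meet the \emph{same} $O_i$. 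Finally, your guess about where $d>2$ enters is misplaced: in the paper that hypothesis appears only in the subsequent Theorem, to make the punctured spacelike slice $\Sigma\setminus\mathcal{J}(M)$ path-connected so that a curve $\gamma$ exists at all; it plays no role in Lemma~2 itself.
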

\begin{proof}
    We just use lemma 1 and the topological definition of open set. Then, $\mathcal{D}_\gamma$ is causally convex because it is a union of causally convex sets.
\end{proof}

We are now ready to prove our theorem:
\begin{theorem}
    For any measurement region $M$ defined on a Minkowski spacetime of dimension $d>2$ and two points $x,y\notin J(M)$, there exists a bounded open causally convex region $R$  such that $x,y\in R$ and $R\cap \mathcal{J}(M)=\emptyset$.
\end{theorem}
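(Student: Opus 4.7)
The plan is to reduce the statement to Lemma 2 by exhibiting a continuous curve from $x$ to $y$ inside $\mathcal{S}_M = \mathcal{M}\setminus\mathcal{J}(M)$, which is where the dimensional hypothesis $d>2$ enters essentially.

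First I would establish that $\mathcal{S}_M$ is path-connected when $d>2$. Since $M$ is closed and bounded (cf.\ Section~\ref{s.measurements_events}, where $M=M_c\cup M_o\cup M_d$ is a finite union of closed sets arising from compactly supported interactions), we can enclose $M$ in a compact set $K$; its causal envelope $\mathcal{J}(K)$ is a bounded ``double cone'' with a single timelike axis. There exists $T>0$ such that the constant-time hyperplanes $\Sigma_{\pm T}=\{t=\pm T\}$ lie entirely in $\mathcal{S}_M$. Given any two points $x,y\notin \mathcal{J}(M)$, I connect each of them to $\Sigma_T$ (or $\Sigma_{-T}$) by a short timelike segment staying within $\mathcal{S}_M$, and then move freely within $\Sigma_T$; since the spatial slice has dimension $d-1\geq 2$, any two points on it can be joined by a path that avoids the compact spatial footprint of $\mathcal{J}(M)\cap\Sigma_T$ by going around it. In $d=2$ this fails because $\mathcal{J}(M)\cap\Sigma_T$ disconnects the spatial line, which matches the remark in Section~\ref{s.second_solution} that the $d=2$ case is pathological. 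The outcome of this step is a piecewise-linear curve $\gamma:[0,1]\to\mathcal{S}_M$ with $\gamma(0)=x$ and $\gamma(1)=y$.

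Second, for each $\sigma\in[0,1]$ the point $\gamma(\sigma)$ lies outside the closed set $\mathcal{J}(M)$, so an obvious adaptation of Lemma 1 (applied with the closed set $\mathcal{J}(M)$ in place of $\mathcal{J}(q)$, the argument being identical) yields an open neighborhood $\mathcal{O}_{\gamma(\sigma)}$ of $\gamma(\sigma)$, which I choose with compact closure contained in $\mathcal{S}_M$, such that $\mathcal{D}(\mathcal{O}_{\gamma(\sigma)})\cap \mathcal{J}(M)=\emptyset$. By Lemma 2 applied to this choice, the set
\begin{equation}
R \;\equiv\; \mathcal{D}_\gamma \;=\; \bigcup_{\sigma\in[0,1]} \mathcal{D}(\mathcal{O}_{\gamma(\sigma)})
\end{equation}
is open, causally convex, contains both $x$ and $y$, and is disjoint from $\mathcal{J}(M)$. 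Boundedness follows by compactness of $\gamma([0,1])$: extract a finite subcover $\{\mathcal{O}_{\gamma(\sigma_k)}\}_{k=1}^N$; each $\mathcal{O}_{\gamma(\sigma_k)}$ has compact closure in Minkowski space, so each $\mathcal{D}(\mathcal{O}_{\gamma(\sigma_k)})$ is bounded, and a finite union of bounded sets is bounded.

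The main obstacle is the first step, namely proving path-connectedness of $\mathcal{S}_M$ in $d>2$ for a general bounded closed $M$ (which need not be convex or connected in full generality). The clean route is the one sketched above via a far-future (or far-past) Cauchy slice $\Sigma_T$, together with the topological fact that $\mathbb{R}^{d-1}\setminus C$ is path-connected for any compact $C$ whenever $d-1\geq 2$. Everything else is a bookkeeping consequence of Lemmas~1 and 2.
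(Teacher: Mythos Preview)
Your overall strategy---show $\mathcal{S}_M=\mathcal{M}\setminus\mathcal{J}(M)$ is path-connected for $d>2$, then feed the resulting curve into Lemma~2---matches the paper's approach in the spacelike case. However, your path-connectedness argument has a genuine gap. The set $\mathcal{J}(K)=\mathcal{J}^+(K)\cup\mathcal{J}^-(K)$ is not bounded; it extends to $t\to\pm\infty$, so no constant-time hyperplane $\Sigma_{\pm T}$ lies entirely in $\mathcal{S}_M$ (you seem to realise this two sentences later when you speak of ``the compact spatial footprint of $\mathcal{J}(M)\cap\Sigma_T$''). The decisive failure is the step ``connect $x$ to $\Sigma_T$ by a short timelike segment staying within $\mathcal{S}_M$'': take $M=\{0\}$ and $x=(0,1,0,\dots)\in\mathcal{S}_M$; the vertical line $(t,1,0,\dots)$ exits $\mathcal{S}_M$ at $|t|=1$, so you cannot reach any far slice this way. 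In general a point of $\mathcal{S}_M$ close to the boundary of $\mathcal{J}^+(M)$ enters $\mathcal{J}^+(M)$ along every future-directed timelike curve.

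The paper sidesteps this with a case split. For $x,y$ causally related it takes directly the open diamond $R=\mathcal{J}^-(\mathcal{O}_x)\cap\mathcal{J}^+(\mathcal{O}_y)$; here any causal curve between two points of $\mathcal{S}_M$ automatically stays in $\mathcal{S}_M$, since a point on it lying in $\mathcal{J}^\pm(M)$ would force the corresponding endpoint into $\mathcal{J}^\pm(M)$. For $x,y$ spacelike the paper works on a \emph{single} spacelike slice $\Sigma$ containing both points and argues that $\Sigma\setminus\mathcal{J}(M)$ is path-connected because $d-1\geq 2$, then applies Lemma~2 to a curve in that slice. Your uniform far-future-slice argument can be salvaged by first moving $x$ \emph{spatially} within its own slice $\Sigma_{t_x}$ (around the compact set $\mathcal{J}(M)\cap\Sigma_{t_x}$, which is where $d>2$ is actually used) to a point distant enough that a subsequent purely temporal segment to $\Sigma_T$ stays in $\mathcal{S}_M$; but as written the argument does not close.
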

\begin{proof}
    First, we notice that there always exist open neighbourhoods $\mathcal{O}_x$ and $\mathcal{O}_y$, respectively of $x$ and $y$, that are disjoint from $\mathcal{J}(M)$, because $x,y\notin \mathcal{J}(M)$ and $M$ is closed. Next, we distinguish the two cases for which the points are either 1) timelike or lightlike, or 2) spacelike separated. 

In the first case, we pick as $R$ the one of the two regions $R^{\pm}=\mathcal{J}^\mp(\mathcal{O}_x)\cap\mathcal{J}^\pm(\mathcal{O}_y)$ that is not empty; without loss of generality, we assume $R=R^+$. Then, $\mathcal{O}_x\cap\mathcal{J}^+(M)=\emptyset$ implies $\mathcal{J}^-(\mathcal{O}_x)\cap \mathcal{J}^+(M)=\emptyset$, and $\mathcal{O}_y\cap\mathcal{J}^-(M)=\emptyset$ implies $\mathcal{J}^+(\mathcal{O}_y)\cap \mathcal{J}^-(M)=\emptyset$. Hence $R\cap\mathcal{J}(M)=\emptyset$.

Next, we consider $x$ and $y$ spacelike. Since they are also spacelike to any $p\in M$, we can pick a spacelike slice $\Sigma$ that includes $x, y$ and some $p$. Next, we define a curve $\gamma:[0,1]\rightarrow \Sigma$ with $\gamma(0)=x$, $\gamma(1)=y$, and such that $\gamma\subset \Sigma'$, where $\Sigma'= \Sigma \setminus \mathcal{J}(M)$. Defining such a curve is possible when $\Sigma'$ is path-connected, which is the case because $\Sigma'$ is homeomorphic to $\mathbb{R}^{d-1}\setminus \{0\}$ which, since $\mathcal{M}$ has dimension $d>2$, is always path-connected. Once the curve has been defined, we can use Lemma 2 to find a causally convex region enclosing it.
\end{proof}

Notice that the requirement that the spacetime be Minkowski and have dimension larger than two can be relaxed by choosing non-trivial spacetime topologies. For example, punctured spacelike slices of a $1+1$ spacetime with periodic spacelike dimension (such as the Lorentzian cylinder $\mathbb{R}\times S^1$) are also path-connected. In these non-trivial topologies, one might be able to recover the vacuum-like description of the post-measurement state outside $\mathcal{J}^+(M)$; in doing so, one should keep in mind the adage saying that ``\textit{quantum massless bosonic fields in 1+1 dimensions do not exist}''\cite{DerezinskiM06} and hence be careful in constructing the states of the theory, keeping in mind that even constructing a good notion of the ground state in two-dimensional free field theories with noncompact Cauchy surfaces is problematic due to IR divergences~\cite{StreaterW70}.

\end{document}